\DeclarePairedDelimiter{\ceil}{\lceil}{\rceil}
\begin{document}
%
\title{\onehalfspacing
Pilot Reuse Strategy Maximizing the Weighted-Sum-Rate in Massive MIMO Systems}



\author{Jy-yong Sohn,
        Sung Whan Yoon,~\IEEEmembership{Student Member,~IEEE,}
        and~Jaekyun~Moon,~\IEEEmembership{Fellow,~IEEE}
\thanks{Manuscript received December 1, 2016; accepted March 6, 2017. The authors are with the School
of Electrical Engineering, Korea Advanced Institute of Science and Technology, Daejeon, 305-701, Republic of Korea (e-mail: jysohn1108@kaist.ac.kr, shyoon8@kaist.ac.kr, jmoon@kaist.edu).}
}


%

\markboth{to appear in IEEE Journal on Selected Areas in Communications 2017}%
{Shell \MakeLowercase{\textit{et al.}}: Bare Demo of IEEEtran.cls for Journals}


\maketitle

\begin{abstract}
Pilot reuse in multi-cell massive multi-input multi-output (MIMO) system is investigated where user groups with different priorities exist.
Recent investigation on pilot reuse has revealed that 
when the ratio of the coherent time interval to the number of users
is reasonably high, it is beneficial not to fully reuse pilots from 
interfering cells. This work finds the optimum pilot assignment strategy
that would maximize the weighted sum rate (WSR) given the user groups 
with different priorities. 
A closed-form solution for the optimal pilot assignment is derived
and is shown to make intuitive sense.
Performance comparison shows that under wide range of channel conditions, the optimal pilot assignment that uses extra set of pilots  
achieves better WSR performance than conventional full pilot reuse.
\end{abstract}

\begin{IEEEkeywords}
Massive MIMO, Multi-cell MIMO, Pilot contamination, Pilot assignment, Pilot reuse, Channel estimation, Weighted-Sum-Rate maximization
\end{IEEEkeywords}

%
\IEEEpeerreviewmaketitle

\section{Introduction}

Deployment of multiple antennas at the transmitter and the receiver, collectively known as MIMO technology, has been 
instrumental in improving link reliability as well as throughput of modern wireless communication systems.
Under the multi-user MIMO setting, the latest development has been the use of 
an exceedingly large number of antennas at base stations (BSs) compared to the number of user terminals (UTs) served by each BS. Under this "massive" MIMO setup, assuming time-division-duplex (TDD) operation with uplink pilot training for 
channel state information (CSI) acquisition, the effect of fast-fading coefficients and uncorrelated noise disappear as the number of BS antennas increases without bound \cite{marzetta2006much, marzetta2010noncooperative}. 
Massive MIMO is considered as a promising technique in 5G communication systems, which has a potential of increasing spectral and energy efficiency significantly with simple signal processing \cite{lu2014overview, larsson2014massive,andrews2014will}. 
The only factor that limits the achievable rate of massive MIMO system is the pilot contamination, which arises due to the reuse of the same pilot set among interfering cells. The pilot reuse causes the channel estimator error at each BS, where precoding scheme is also contaminated by inaccurate channel estimation. This phenomenon degrades the achievable rate of a massive MIMO system even when $M$, the number of BS antennas, tends to infinity; this remains as a fundamental issue in realizing massive MIMO. 

Various researchers have suggested ways to mitigate pilot contamination effect, which are well summarized in \cite{elijah2015comprehensive}.
A pilot transmission protocol which reduces pilot contamination is suggested in \cite{appaiah2010pilot}. They considered a time-shifted pilot transmission, which systematically avoids collision between non-orthogonal pilot signals. However, central control is required for this scheme, and back-haul overload remains as an issue.
Realizing that inaccurate channel estimation is the fundamental reason for pilot contamination, some researchers have focused on effective channel estimation methods for reducing pilot contamination.  
Exploitation of the angle-of-arrival (AoA) information in channel estimation is considered 
\cite{yin2013coordinated}, which achieves interference-free channel estimation and effectively eliminates pilot contamination when the number of antennas increases without bound. However, the channel estimation method of \cite{yin2013coordinated} uses $2^{nd}$ order statistics for the channels associated with pilot-sharing users in different cells, requiring inter-cell cooperation via a back-haul network.
Under the assumption of imperfect CSI, another researchers investigated uplink/downlink signal processing schemes to mitigate pilot contamination \cite{jose2011pilot, ashikhmin2012pilot, li2015multi}. A 
precoding method called pilot contamination precoding (PCP) is suggested  to reduce pilot contamination \cite{ashikhmin2012pilot}. This method utilizes slow fading coefficient information for entire cells, and cooperation between BSs is required. Multi-cell MMSE detectors are considered in several literatures \cite{ngo2012performance, guo2013performance, guo2014uplink, li2015multi}, but interference due to pilot contamination cannot be perfectly eliminated.
Moreover, most of these works rarely consider the potential of pilot allocation in reducing pilot contamination.

Some recent works shed light on appropriate pilot allocation as a candidate to tackle pilot contamination. 
There are three types of pilot allocation: one is to allocate orthogonal pilots within a cell assuming full pilot reuse among different cells \cite{yin2013coordinated, zhu2015smart, nguyen2015resource}, another is reusing a single pilot sequence among users within a cell, while users in different cells employ orthogonal pilots \cite{liu2015pilot}, and the other is finding the pilot reuse rule among different cells whereas all users within a cell are guaranteed to be assigned orthogonal pilots \cite{li2015multi, bjornson2016massive, saxena2015mitigating, sohn2015pilots}.

A coordination-based pilot allocation rule is suggested in \cite{yin2013coordinated}, which shapes the covariance matrix in order to achieve interference-free channel estimation and eliminate pilot contamination. However, back-haul overload is required to apply the suggested pilot allocation method. Another coordination-based pilot allocation in \cite{zhu2015smart} utilizes slow-fading coefficient information, where the pilot with the least inter-cell interference is assigned to the user with the worst channel quality. The optimal allocation rule of resources (transmit power, the number of BS antennas, and the pilot training signal) which maximizes the spectral efficiency is considered in \cite{nguyen2015resource}. Cooperation among BSs is assumed in this optimization, where each BS transmits the path-loss coefficients information to other interfering cells.
All these works show that appropriate pilot assignment is important in reducing pilot contamination, but they all suggest coordination-based solutions which have implementation issues. 
On the other hand, reusing the same pilot within a cell is considered in \cite{liu2015pilot}, whereas interfering cells have orthogonal pilots. The pilot reuse within a cell causes intra-cell interference, which is eliminated by the downlink precoding scheme suggested in the same paper. However, the interference is fully eliminated only when the number of BS antennas is infinite and the equivalent channel is invertible; the solution for a general setting still remains as an open problem.

Unlike conventional full pilot reuse, some researchers considered less aggressive pilot reuse scheme as a candidate for reducing pilot contamination. Simulation results in \cite{li2015multi, bjornson2016massive, saxena2015mitigating} showed that less aggressive pilot reuse can increase spectral efficiency in some practical scenarios, but the approaches are not based on a closed-form solution, which does not offer useful insights into the trade-off between increased channel estimation accuracy and decreased data transmission time.
Perhaps \cite{sohn2015pilots} is the first paper which observed and mathematically analyzed the trade-off when less aggressive pilot reuse is applied. In \cite{sohn2015pilots}, 
the present authors 
analyzed the potential of using more pilots than $K$, the number of UTs in a BS, to mitigate pilot contamination and increase the achievable rate. Based on lattice-based hexagonal cell partitioning, they 
formulated the relationship between normalized coherence time and the optimal number of orthogonal pilots utilized in the system. Also, the optimal way of assigning the pilots to different users are specified in a closed-form solution. The optimality criterion was to maximize the net sum-rate, which is the sum rate counting only the data transmission portion of the coherence time. It turns out that for many practically meaningful channel scenarios, departing from conventional full pilot reuse and selecting optimal pilot assignment increases the net sum-rate considerably.


This paper expands the concept of optimal pilot assignment to the practical scenario where different user groups exist with different priorities and where it is necessary to maximize the net weighted-sum-rate (WSR).
A practical communication system which guarantees
sufficient data rates for high-paying selected customers
can be considered.
We formulate a WSR maximization problem by prioritizing the users into several groups and giving different weights to different groups. The higher priority group would get a higher weight, so that the achievable rate of the higher priority group has a bigger effect on the objective function.
We grouped the users by their data rate requirements; in the Internet-of-Things (IoT) era where many different types of devices participate in the network, this kind of differentiation might be helpful. For example, small sensors with less throughput requirement can be considered as lower-priority users, whereas devices with large amount of computation/communication can be considered as higher-priority users.
 In this paper, we consider two priority groups: preferred user group ($1^{st}$ priority group) and regular user group ($2^{nd}$ priority group), but similar result are expected in general multiple priority groups case, as suggested in Section \ref{Section:Further Comments}. A closed-form solution for optimal assignment is obtained, which is consistent with 
intuition. Compared to the result of \cite{sohn2015pilots}, the net-WSR of the optimal assignment beats conventional assignment for a wider range of channel coherence time, which means that departing from full pilot reuse and applying optimal less aggressive pilot reuse is necessary particularly in practical net-WSR maximizing scenarios.

This paper is organized as follows. Section \ref{Section:SystemModel} describes the system model for massive MIMO and the 
pilot contamination effect.
Section \ref{Section:Preliminaries} summarizes the pilot assignment strategy for multi-cell massive MIMO of \cite{sohn2015pilots}, which acts as preliminaries for the main analysis of this paper. Section \ref{Section:OptimizationProblem} formulates the net-WSR-maximizing pilot assignment problem and the closed-form solution is suggested in Section \ref{Section:OptimizationSolution}. Simulation results are included in Section \ref{Section:Simulation}, which support the mathematical results of Section \ref{Section:OptimizationSolution}. Here, comparison is made between the performances of the optimal and conventional assignments. In Section \ref{Section:Further Comments}, further comments on the scenarios with multiple (three or more) priority groups and finite BS antennas are given.
Finally, Section \ref{Section:Conclusion} concludes the paper.

\section{System Model}\label{Section:SystemModel}

\subsection{Multi-Cellular Massive MIMO System}

Consider a communication network with $L$ hexagonal cells, where each cell has $K$ single-antenna users located in a uniform-random manner. Each BS with multiple antennas estimates downlink CSIs by uplink pilot training, assuming channel reciprocity in TDD operation.
The channel model used in this paper is assumed to be identical to that in \cite{bjornson2016massive}; this model fits well with the real-world simulation tested by \cite{gao2015massive}, for both few and many BS antennas. 
Two types of channel models, independent channel and spatially correlated channel, are used in massive MIMO as stated in \cite{elijah2015comprehensive}, while this paper assumes an independent channel model. Antenna elements are assumed to be uncorrelated in this model, which is reasonable with sufficient antenna spacing. 
The complex propagation coefficient $g$ of a link is decomposed into a complex fast-fading factor $h$ and a slow-fading factor $\beta$. The channel link between the $m^{th}$ BS antenna of $j^{th}$ cell and the $k^{th}$ user in the $l^{th}$ cell is modeled as 
$g_{mjkl} = h_{mjkl} \sqrt{\beta_{jkl}}.$
Here, the fast-fading factor $h_{mjkl}$ of each link is modeled as an independent and identically distributed (i.i.d.) complex Gaussian random variable with zero-mean and unit variance.
The slow-fading factor $\beta$ is modeled as
$\beta_{jkl} = {1}/{r_{jkl}^\gamma}$
where $\gamma$ is the signal decay exponent ranging from $2$ to $4$, and 
$r_{jkl}$ is the distance between the $k^{th}$ user of the $l^{th}$ cell and the BS of $j^{th}$ cell.

The channel coherence time and channel coherence bandwidth are denoted as $T_{coh}$ and $B_{coh}$, respectively, while $T_{del} = 1/B_{coh}$ represents the channel delay spread.
Here, we define a dimensionless quantity, \textit{normalized coherence time} $N_{coh}=T_{coh}/T_{del}=T_{coh}B_{coh}$, to represent the number of independently usable time-slots available within the coherence time. 
For a specific numerical example, consider the practical scenario based on OFDM with a frequency smoothness interval of $N_{smooth} = 14$ (i.e., the fast fading coefficient is constant for 14 successive sub-carriers), and a coherence time ranging $20$ OFDM symbols. This example has the corresponding normalized coherence time of $N_{coh}=280$.
The normalized coherence time is divided into two parts: pilot training and data transmission. Based on the CSI estimated in the pilot training phase, data is transmitted in the rest of the coherence time.

\subsection{Pilot Contamination Effect}

In a massive MIMO system with TDD operation, $K$ users in each cell are usually assumed to use orthogonal pilots, so that BS can estimate the channel to each user by collecting uplink pilot signals without interference. However, in the multi-cell system, due to a finite $N_{coh}$ value, it is hard to guarantee orthogonality of pilot sets among adjacent cells. 
Therefore, users in different cells might have non-orthogonal pilot signals, which contaminates the channel estimates of the users. This effect is called the pilot contamination effect, which saturates the achievable rate even as $M$, the number of BS antennas, increases without bound.
The saturation value can be expressed as follows.
Assume each cell has a single user, where the identical pilot signal is reused among different cells. Then, the uplink achievable rate of the user in the $j^{th}$ cell is saturated to
\begin{equation} \label{achievableR}
\lim_{M \to \infty} R_{j} = \log_{2} \left(1+\frac{\beta_{jj}^{2}}{\sum_{l\neq j}\beta_{jl}^{2}}\right)
\end{equation}
where $\beta_{jl}$ represents the slow-fading term of the channel between $j^{th}$ BS and the pilot-sharing user in the $l^{th}$ cell.


\section{Preliminaries}\label{Section:Preliminaries}

In this section, some preliminaries for pilot assignment strategy for multi-cell massive MIMO systems are given. Specifically, our previous work asserts that in some practical circumstances, optimized pilot assignment in a multi-cell massive MIMO system gives much improved sum rate performance than conventional full pilot reuse \cite{sohn2015pilots}. 

\subsection{Hexagonal-Lattice-Based Cell Clustering}

First, the locations of different users are assumed to be independent within a cell, and $K$ users in a given cell have orthogonal pilots. Thus the pilot assignment on $L$ cells with $K$ users each can be decomposed into $K$ independent sub-assignments on $L$ cells with single user each.


\begin{figure}
	\centering
	\subfloat[][3-way Partitioning]{\includegraphics[height =24mm ]{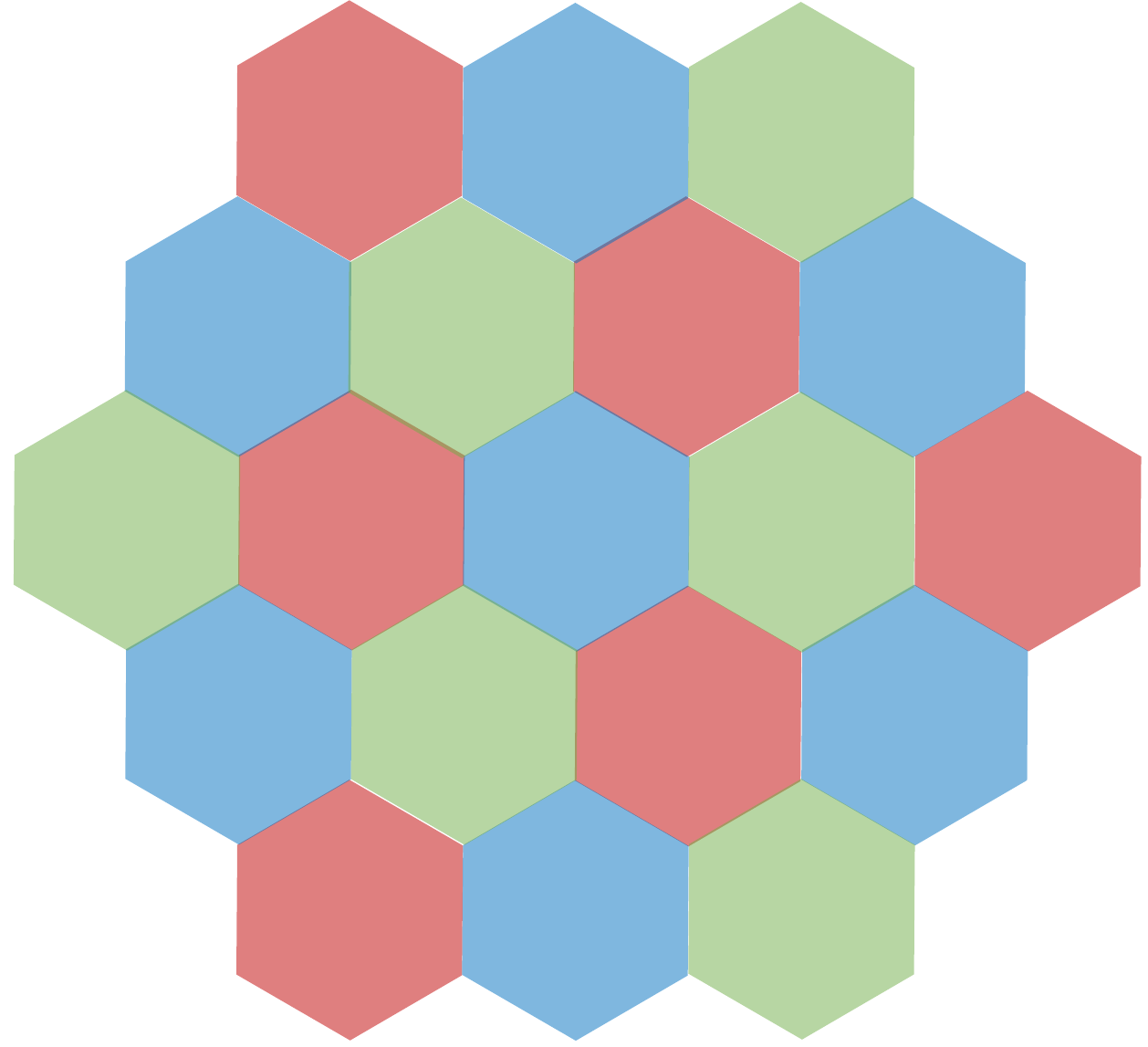}\label{Fig:3-way partitioning}}
	\quad \quad
	\subfloat[][Hierarchical set partitinoing]{\includegraphics[width=0.37\textwidth]{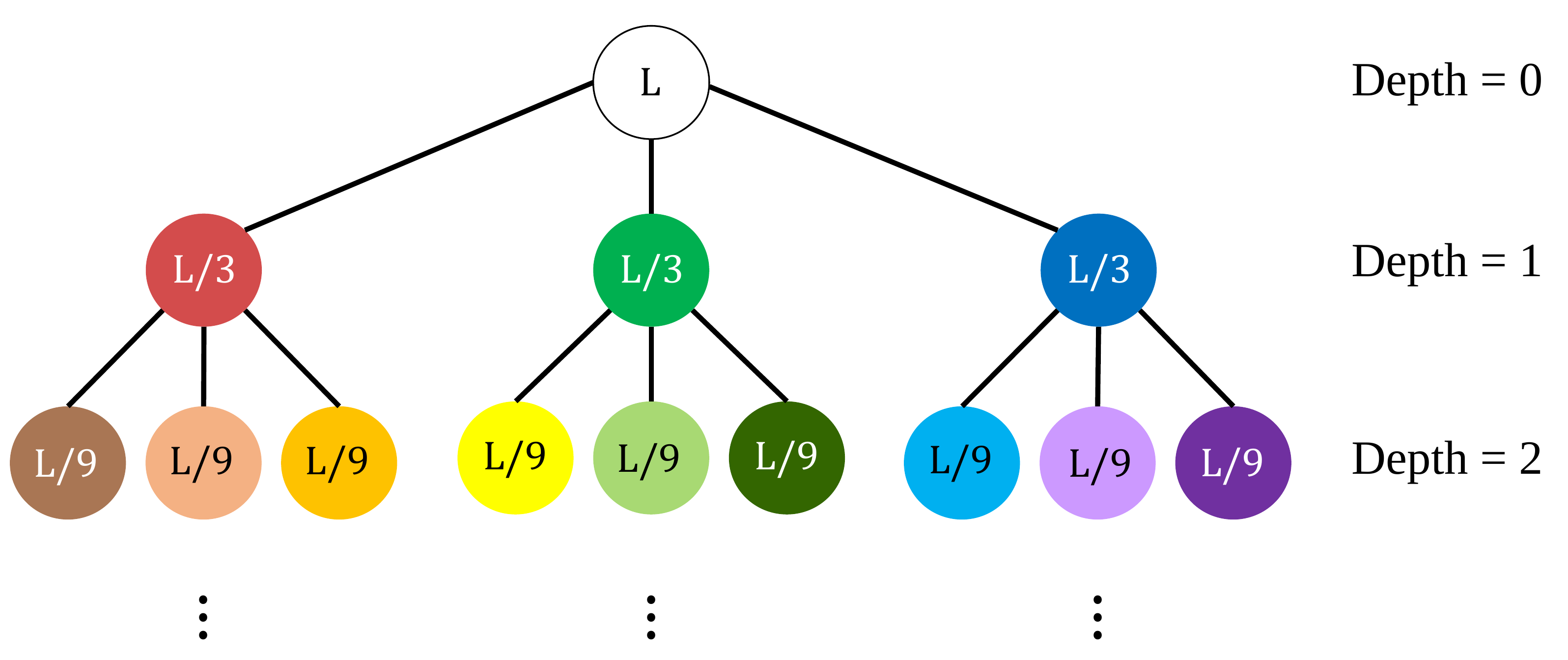}\label{Fig:Hierarchy tree structure}}
	\caption{The Cell Partitioning Method}
	\label{steady_state}
\end{figure}

%

For identifying which cells reuse the same pilot set,
consider an example of hexagonal cells with lattice structure in Fig. \ref{Fig:3-way partitioning}.  The 3-way partitioning groups the 19 cells into three equi-distance subsets colored by red, green and blue.
Since each subset in Fig. \ref{Fig:3-way partitioning} forms a lattice, we can consider applying this 3-way partitioning in a hierarchical manner as illustrated in Fig. \ref{Fig:Hierarchy tree structure}. 
In the tree structure, the root node represents $L$ cells with three children nodes produced by 3-way partitioning. 
After $i$ consecutive partitioning, depth $i$ can contain $3^{i}$ leaves where each leaf represents $L\cdot3^{-i}$ cells sharing the same pilot set. 
With these hierarchical partitioning and tree-like representation, pilot assignment on a multi-cell network can be uniquely represented.
The maximally achievable depth of the tree is set to $(\log_3 L - 1)$. This is because the number of pilot-sharing cells need to be greater than $1$, since the user who monopolizes a pilot has an infinite achievable rate (from (\ref{achievableR})), a meaningless situation. For meaningful analysis, we considered $L$ values with $(\log_3 L - 1) \geq 1$, i.e., $L \geq 9$.

Another importance associated with identifying the depth of a leaf is that the achievable rate of users in the set of cells corresponding to the leaf depends only on the depth. Let $C_{i}$ be the achievable rate of a user in a cell at depth $i$. $C_{i}$ is an increasing function of $i$ with a nearly linear behavior ($C_{i+1}-C_{i} \simeq 6$) \cite{sohn2015pilots}.

\newtheorem{theorem}{Theorem}
\newtheorem{lemma}{Lemma}
\newtheorem{corollary}{Corollary}
\newtheorem{prop}{Proposition}
\newenvironment{definition}[1][Definition]{\begin{trivlist} \item[\hskip \labelsep \normalfont #1]}{\end{trivlist}}

\subsection{Pilot Assignment Vector and Net Sum Rate}
The tree-like hierarchical representation of pilot assignment strategy can be uniquely converted into a vector form. Let $\mathbf{p}=(p_{0},p_{1},...,p_{\log_{3}L-1})$ be a vector where $p_{i}$ is the number of leaves in the corresponding tree-like hierarchical representation.

\begin{definition}[Definition:\nopunct]
Let $L, K$ be positive integers. For the given $L$ cells with $K$ users each, the set $P_{L,K}$ of valid pilot assignment vectors for $LK$ users based on 3-way partitioning is given by
\begin{align*}
P_{L,K} = \{\mathbf{p}&=(p_0,p_1,\cdots,p_{\log_3 L -1}): \\
\: &p_i \in \{0, 1, \cdots, K3^i \} \textrm{, and} \: \sum\limits_{i=0}^{\log_3 L -1} p_i3^{-i}=K \}
\end{align*} 
\end{definition}
For a pilot assignment vector $\mathbf{p}=(p_{0},p_{1},...,p_{\log_{3}L-1})$, the pilot length is defined as
$N_{pil}(\mathbf{p}) = \sum\limits_{i=0}^{\log_3 L-1} p_i $, which represents the number of pilots utilized in the system.
Because each element $p_{i}$ represents the number of leaves at depth $i$, a total of $L3^{-i}p_{i}$ users are located at depth $i$.
Recalling $C_{i}$ is the achievable rate of a user in the $i^{th}$ depth, the per-cell sum rate of the $L$-cell network with the pilot assignment scheme $\mathbf{p}$ is
\begin{equation}\label{sum rate}
C_{sum}(\mathbf{p})=\frac{1}{L}\sum\limits_{i=0}^{\log_3 L-1} L3^{-i} p_i  C_{i}=\sum\limits_{i=0}^{\log_3 L-1} 3^{-i} p_i  C_{i}.
\end{equation}
Considering the actual duration of data transmission after pilot-based channel estimation, the per-cell net sum-rate for a given normalized coherence time $N_{coh}$ can be expressed as 
\begin{equation*}
C_{net}(\mathbf{p},N_{coh})=\frac{N_{coh}-N_{pil}(\mathbf{p})}{N_{coh}}C_{sum}(\mathbf{p}).
\end{equation*}

\subsection{Optimal Pilot Assignment for Multi-User Multi-Cell System}\label{subsection:Optimal Pilot Assignment for Multi-User Multi-Cell System}

There are two main findings in our previous work \cite{sohn2015pilots}. 

\textit{First Finding:} With fixed pilot length $N_{p0}$ or, equivalently, fixed time duration allocated to pilot-based channel estimation, the closed form solution for the optimal pilot assignment vector maximizing the per-cell sum rate is found. The optimal solution is formulated as
\begin{equation}\label{Sohn2015:Thm1}
\mathbf{p}_{opt,K}'(N_{p0}) = \underset{\mathbf{p}\in \Omega(N_{p0},K)}{\arg\max}\ C_{sum}(\mathbf{p}),
\end{equation}
where $\Omega(N_{p0},K)=\{\mathbf{p}\in P_{L,K} \: : \: N_{pil}(\mathbf{p})=N_{p0}\}$
is the set of all valid pilot assignment vectors $\mathbf{p}$, with length of $N_{pil}(\mathbf{p})=N_{p0}$.
The solution is given by
$\mathbf{p}'_{opt,K}(N_{p0})=(p'_{0},\cdots,p'_{\log_3 L -1})$ where
\begin{equation} \label{Thm3 result}
p'_{i} =
\begin{cases}
\displaystyle\sum_{t=0}^{i}K3^{t}-\frac{N_{p0}-K}{2} & i=\chi(N_{p0},K) \\
3\left(\displaystyle\frac{N_{p0}-K}{2}-\displaystyle\sum_{t=0}^{i-2}K3^{t}  \right) & i=\chi(N_{p0},K)+1 \\
0 & \textrm{otherwise}
\end{cases}
\end{equation}
with $\chi(N_{p0},K)=\min\{k \: : \: \sum_{i=0}^{k}K3^{i}>\frac{N_{p0}-K}{2} \}$
being the first non-zero position of $\mathbf{p}'_{opt,K}(N_{p0})$, i.e., the depth of the least deep leaf node.

\textit{Second Finding:} The remaining question is that for a given channel coherence time, how much time duration should be allocated to pilot-based channel estimation. Equivalently, with a given $N_{coh}$, what is the optimal duration $N_{p0}$ for the pilot transmission which maximizes the per-cell net sum-rate. This is given in \cite{sohn2015pilots}.
For some practical scenarios, the optimal solution has been shown to be far different from the conventional full pilot reuse. Considerable net sum-rate gains were observed in the simulation results.



\section{Pilot Assignment Strategy for net-WSR maximization}\label{Section:OptimizationProblem}

In this section, we provide analysis on pilot assignment strategy for net-WSR maximization. 
The scenario of using orthogonal pilot sequences possibly larger than $K$ is considered, while users within the same cell are guaranteed to have orthogonal pilots. For ease of analysis, users in different priority groups are assumed to use orthogonal pilots to each other.

\subsection{User Prioritizing}

We assume $K$ users in each cell are divided into 2 groups depending on the priority (Section \ref{Section:Further Comments} deals with the general case where the number of priority groups is greater than 2).
Let $\alpha$ be the ratio of the number of $1^{st}$ priority users to the total number of users ($0 < \alpha < 1$). Let $K_1$ be the number of $1^{st}$ priority users in each cell, i.e., $K_1 = \alpha K$. Similarly, $K_2$ is the number of $2^{nd}$ priority users in each cell, i.e., $K_2 = (1-\alpha) K$.
Considering the scenario with different weights on different user groups, let $\omega$ be the weight on the $1^{st}$ priority group ($0.5 \leq \omega < 1$).

\subsection{Pilot Assignment Vector}

For mathematical analysis on pilot assignment strategy, we utilize tools established in \cite{sohn2015pilots}: 3-way partitioning and pilot assignment vector representation. 
First, since we assume users in different priority groups have orthogonal pilots, pilot assignment for the system can be divided into 2 independent sub-assignments: pilot assignment for $1^{st}$ priority group and then for $2^{nd}$ priority group. Various available pilot assignments for each group can be easily expressed in a vector form by using the definition in Section III. 
$P_{L,K_1}$ represents the set of valid pilot assignments for $1^{st}$ priority group while $P_{L,K_2}$ is used for $2^{nd}$ priority group.

The set of valid pilot assignments for the entire network ($LK$ users composed of $LK_1$ users in $1^{st}$ priority and $LK_2$ users in $2^{nd}$ priority) needs to be defined.
The set of valid pilot assignments for $L$ cells where each cell has $K_{1}$ users in $1^{st}$ priority group and $K_{2}$ users in $2^{nd}$ priority group can be defined as
\begin{eqnarray*} 
\widetilde P_{L,K, \alpha} = \{\bm{[}\mathbf{p}_{1}, \mathbf{p}_{2}\bm{]} \:\: : \enspace 
\mathbf{p}_{1} \in P_{L,K_{1}}, \mathbf{p}_{2} \in P_{L,K_{2}} \}.
\end{eqnarray*}
Therefore, in order to consider all possible pilot assignments $\mathbf{p} \in \widetilde P_{L,K, \alpha}$ for our system, we need to check possible pairs of $[\mathbf{p}_{1}, \mathbf{p}_{2}]$ where $\mathbf{p}_{1}$ is a valid assignment for $1^{st}$ priority users and $\mathbf{p}_{2}$ is a valid assignment for $2^{nd}$ priority users.

\subsection{Problem Formulation}

For $\mathbf{p}_{1} \in P_{L,K_{1}}$ and $\mathbf{p}_{2} \in P_{L,K_{2}}$, the corresponding per-cell WSR value is expressed as
\begin{equation} \label{weighted sum rate}
C_{wsr}(\mathbf{p}_{1},\mathbf{p}_{2})= \omega C_{sum}(\mathbf{p}_{1}) + (1-\omega) C_{sum}(\mathbf{p}_{2}) .
\end{equation}
Considering the fact that data transmission is available for the portion of coherence time not allocated to pilot training, the per-cell net-WSR is written as
\begin{align*} 
C_{net,wsr}&(\mathbf{p}_{1},\mathbf{p}_{2}, N_{coh})  \\
&=\frac{N_{coh}- [N_{pil}(\mathbf{p}_1)+N_{pil}(\mathbf{p}_2)]}{N_{coh}} C_{wsr}(\mathbf{p}_{1},\mathbf{p}_{2}).
\end{align*}
Now, we formulate our optimization problem. Let $L$, $K$, $\alpha$, and $\omega$ be fixed.
For a  given $N_{coh}$ value, we want to find out optimal $\mathbf{p}_{1} \in P_{L,K_{1}}$ and $\mathbf{p}_{2} \in P_{L,K_{2}}$ which maximize $C_{net,wsr}(\mathbf{p}_{1},\mathbf{p}_{2}, N_{coh})$. In other words, the optimal pilot assignment vector $\mathbf{p}_{opt}(N_{coh}) 
$ which maximizes net-WSR is
\begin{align*}
\mathbf{p}_{opt}(N_{coh}) &= [\mathbf{p}_{opt}^{(1)}(N_{coh}), \mathbf{p}_{opt}^{(2)}(N_{coh})]\nonumber\\
&\triangleq \underset{[\mathbf{p}_{1},\mathbf{p}_{2}]\in \widetilde P_{L,K,\alpha}}{\arg\max}\ C_{net,wsr}(\mathbf{p}_{1},\mathbf{p}_{2},N_{coh}).
\end{align*}
For a given $N_{coh}$, the $\mathbf{p}_{opt}$ function outputs the optimal pilot assignment pair $[\mathbf{p}_{1},\mathbf{p}_{2}]\in \widetilde P_{L,K,\alpha}$, where $\mathbf{p}_{opt}^{(i)}(N_{coh})$ is the optimal assignment for the $i^{th}$ priority group.

\section{Closed-form solution for optimal pilot assignment}\label{Section:OptimizationSolution}

In this section, the closed-form solution to the net-WSR maximization problem is presented. 
The solution can be obtained in two steps, which are dealt with in the following two subsections, respectively.
Subsection A establishes the optimal pilot assignment rule which maximizes the WSR when the total available number of pilots is given.
Subsection B finds the optimal total pilot length which maximizes the net-WSR for a given $N_{coh}$. Combining these two solutions, we obtain $\mathbf{p}_{opt}(N_{coh})$.

\subsection{Optimal Pilot Assignment Vector under a Total Pilot Length Constraint}

Under a constraint on the  total pilot length $N_{pil}(\mathbf{p}_{1}) + N_{pil}(\mathbf{p}_{2}) = T$, let us find the pilot assignment vector which maximizes $C_{wsr}(\mathbf{p}_{1}, \mathbf{p}_{2})$. 
This sub-problem can be formulated as
\begin{align*}
\widetilde P_{opt}(&T) = \Big\{\bm{[}\mathbf{p}_{1},\mathbf{p}_{2}\bm{]} \in \Theta(T) \: : \  \\ &C_{wsr}(\mathbf{p}_{1},\mathbf{p}_{2}) \geq  C_{wsr}(\mathbf{p'}_{1},\mathbf{p'}_{2}) 
 \:\:\:\: \forall \bm{[}\mathbf{p'}_{1},\mathbf{p'}_{2}\bm{]} \in \Theta(T) \Big\}
\end{align*}
where 

$\Theta(T)= \{[\mathbf{p}_{1}, \mathbf{p}_{2}] \in \widetilde P_{L,K, \alpha} \:  :  \:
 N_{pil}(\mathbf{p}_{1})+N_{pil}(\mathbf{p}_{2})=T \}$.
This optimization problem might have multiple solutions. We thus define the set $\widetilde P_{opt}(T) $ of optimal pilot assigning strategies. 
Lemmas \ref{Lemma:Lemma1} and \ref{Lemma:Lemma2} given below specify $\widetilde P_{opt}(T)$, the set of optimal pilot assignment vectors under the pilot length constraint. Before stating our main Lemmas, we introduce some short-hand notations:
\begin{align*}
& \mathbb{Z} =\{\cdots, -2, -1, 0, 1, 2, \cdots \},\nonumber\\
& B(T) = \max(K_1, T-LK_2/3), \nonumber\\
& F(T) = \min(T-K_2, LK_1/3), \nonumber\\
& S_0(T) = \{B(T), B(T) + 2, \cdots, F(T)\}, \nonumber\\
& S_1(T) = S_0(T) \setminus \{F(T)\},\nonumber\\
& g_T(t) = 3^{\chi(t, K_1)-\chi(T-t-2, K_2)} \frac{1-\omega}{\omega}.
\end{align*}
where 
$\chi(N_{p0},K)=\min\{k \: : \: \sum_{i=0}^{k}K3^{i}>\frac{N_{p0}-K}{2} \}$ as defined in section III-C.

The total pilot length $N_{pil}(\mathbf{p}_{1}) + N_{pil}(\mathbf{p}_{2}) = T$ can be decomposed into two parts: pilot length $N_{pil}(\mathbf{p}_{1}) = t$ for $1^{st}$ priority group and pilot length $N_{pil}(\mathbf{p}_{2}) = T-t$ for $2^{nd}$ priority group. 
From \cite{sohn2015pilots}, $N_{pil}(\mathbf{p}_{1}) \in \{K_1, K_1 + 2, \cdots, LK_1 / 3 \}$ and $N_{pil}(\mathbf{p}_{2}) \in \{K_2, K_2 + 2, \cdots, LK_2 / 3 \}$ holds for $[\mathbf{p}_{1}, \mathbf{p}_{2}] \in \widetilde P_{L,K, \alpha}$, so that we can obtain the set of possible [$N_{pil}(\mathbf{p}_{1})$, $N_{pil}(\mathbf{p}_{2})$] pairs illustrated in Table \ref{Table:Pilot length pairs}. As seen in the Table, $B(T)$ represents the minimum possible value assigned for $1^{st}$ priority group, and $F(T)$ is the maximum possible value assigned for $1^{st}$ priority group (Note that $B(T) \leq F(T)$ by definition). $S_0(T)$ represents the set of possible values assigned for $1^{st}$ priority group.
Later, it can be seen that $g_T : S_1(T)\rightarrow \mathbb{R}$ is defined for comparing $C_{wsr}$ values of different assignments.
Now we state our main Lemmas.

\begin{table}
	\small
\caption{Possible pilot length pairs for each priority group}
\centering
\label{Table:Pilot length pairs}
	\setlength\tabcolsep{1.5pt} 
\begin{tabular}{|c|c|}
\hline
$N_{pil}(\mathbf{p}_{1}) = t$ & $N_{pil}(\mathbf{p}_{2}) = T-t$ \tabularnewline
\hline
$B(T)$ & $T-B(T)$ \tabularnewline
$B(T)+2$ & $T-B(T)-2$ \tabularnewline
$\vdots$ & $\vdots$ \tabularnewline
$F(T) - 2$ & $T-F(T)+2$ \tabularnewline
$F(T)$ & $T-F(T)$ \tabularnewline
\hline
\end{tabular}
\end{table}

\begin{lemma}\label{Lemma:Lemma1}
If $\log_3 \frac{\omega}{1-\omega} \notin \mathbb{Z}$, then
the set $\widetilde P_{opt}(T)$ of optimal pilot assignment vectors maximizing $C_{wsr}$ is
\begin{align}\label{Lemma1 result}
\widetilde P_{opt}(T)&=\Big\{\bm{[}\: \mathbf{p}'_{opt, K_{1}}(\rho(T)),\mathbf{p}'_{opt, K_{2}}(T-\rho(T)) \:\bm{]}\Big\}
\end{align}
where
\begin{equation}\label{def:rho_T}
\rho(T) =
\begin{cases}
B(T), \ \ \ \ \ \ \  \ \text{if  } S_1(T) = \emptyset \text{ or } g_T(B(T)) > 1 \\
F(T), \ \ \ \ \ \ \ \ \ \ \ \ \ \ \text{ else if }  g_T(F(T) - 2) < 1\\
min \{ t\in S_1(T)  \:\: : \enspace g_T(t) \geq 1 \}, \ \ \ \ \text{otherwise.}
\end{cases}
\end{equation}


\end{lemma}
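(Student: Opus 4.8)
The plan is to reduce the joint optimization over pairs $[\mathbf{p}_1,\mathbf{p}_2]$ to a one-dimensional search over the pilot-split parameter $t = N_{pil}(\mathbf{p}_1)$, and then to analyze a discrete ``ratio test'' that determines the optimal $t$. First I would observe that for any fixed total pilot length $T$ and any fixed split $N_{pil}(\mathbf{p}_1) = t$, $N_{pil}(\mathbf{p}_2) = T-t$, the objective $C_{wsr}(\mathbf{p}_1,\mathbf{p}_2) = \omega C_{sum}(\mathbf{p}_1) + (1-\omega)C_{sum}(\mathbf{p}_2)$ decouples, so it is maximized by independently choosing $\mathbf{p}_1 = \mathbf{p}'_{opt,K_1}(t)$ and $\mathbf{p}_2 = \mathbf{p}'_{opt,K_2}(T-t)$ using the First Finding (\ref{Sohn2015:Thm1})--(\ref{Thm3 result}). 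Hence $\widetilde P_{opt}(T)$ consists of exactly those pairs coming from the value(s) of $t \in S_0(T)$ that maximize the profile function $\Phi(t) \triangleq \omega\, C_{sum}(\mathbf{p}'_{opt,K_1}(t)) + (1-\omega)\, C_{sum}(\mathbf{p}'_{opt,K_2}(T-t))$. The range $S_0(T) = \{B(T), B(T)+2,\dots,F(T)\}$ is exactly the feasible set of $t$ values read off from Table \ref{Table:Pilot length pairs}, so the remaining task is to find $\arg\max_{t\in S_0(T)} \Phi(t)$, call it $\rho(T)$, and show it has the stated closed form.

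The key step is to compute the forward difference $\Delta\Phi(t) = \Phi(t+2) - \Phi(t)$ for $t\in S_1(T)$ and show it changes sign at most once, i.e. that $\Phi$ is (discretely) unimodal in $t$. For this I would invoke the known structure of $C_{sum}(\mathbf{p}'_{opt,K}(N_{p0}))$ as a function of $N_{p0}$ from \cite{sohn2015pilots}: increasing $N_{p0}$ by $2$ moves two users from depth $\chi(N_{p0},K)$ down to deeper leaves, and using the near-linear behavior $C_{i+1}-C_i \simeq 6$ one gets that the marginal gain of adding two pilots to a group with current pilot length $n$ is essentially proportional to $3^{-\chi(n,K)}$ times a fixed constant. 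Therefore $\Delta\Phi(t)$ has the same sign as $\omega\, 3^{-\chi(t,K_1)} - (1-\omega)\, 3^{-\chi(T-t-2,K_2)}$, which after rearranging is the sign of $g_T(t) - 1$ where $g_T(t) = 3^{\chi(t,K_1)-\chi(T-t-2,K_2)}\frac{1-\omega}{\omega}$. Since $\chi(t,K_1)$ is nondecreasing in $t$ and $\chi(T-t-2,K_2)$ is nonincreasing in $t$, the exponent $\chi(t,K_1)-\chi(T-t-2,K_2)$ is nondecreasing in $t$, so $g_T$ is nondecreasing; this gives unimodality, with $\Phi$ increasing while $g_T < 1$ and decreasing once $g_T > 1$. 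The condition $\log_3\frac{\omega}{1-\omega}\notin\mathbb{Z}$ guarantees $g_T(t)\neq 1$ for all $t$ (since $g_T(t)=1 \iff \chi(T-t-2,K_2)-\chi(t,K_1) = \log_3\frac{\omega}{1-\omega}$, which would force an integer), so the maximizer is unique and the three cases in (\ref{def:rho_T}) simply enumerate: (i) $\Phi$ is already nonincreasing from the start (empty $S_1(T)$, or $g_T(B(T))>1$) so $\rho(T)=B(T)$; (ii) $\Phi$ is still increasing at the last step ($g_T(F(T)-2)<1$) so $\rho(T)=F(T)$; (iii) otherwise $\rho(T)$ is the first $t$ at which $\Phi$ stops increasing, i.e. the first $t\in S_1(T)$ with $g_T(t)\geq 1$ (equivalently $>1$ under the non-integer hypothesis).

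The main obstacle I anticipate is making the ``marginal gain $\propto 3^{-\chi}$'' claim fully rigorous. The relation $C_{i+1}-C_i \simeq 6$ is only approximate, so I cannot literally factor the difference $C_{sum}(\mathbf{p}'_{opt,K}(n+2)) - C_{sum}(\mathbf{p}'_{opt,K}(n))$ as a constant times $3^{-\chi(n,K)}$; instead I would need the exact expression for this difference from the analysis in \cite{sohn2015pilots} (it involves $C_{\chi(n,K)+1} - C_{\chi(n,K)}$, possibly with a boundary correction when $n+2$ crosses into a new value of $\chi$) and verify that the sign of $\Delta\Phi(t)$ is still governed exactly by $g_T(t)$ versus $1$ — in particular that the $\chi$-boundary crossings on the two groups do not interact to break monotonicity of the sign pattern. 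A secondary technical point is handling the endpoints of $S_0(T)$ correctly: $B(T) = \max(K_1, T - LK_2/3)$ and $F(T) = \min(T-K_2, LK_1/3)$ encode the feasibility constraints $N_{pil}(\mathbf{p}_1)\in\{K_1,\dots,LK_1/3\}$ and $N_{pil}(\mathbf{p}_2)\in\{K_2,\dots,LK_2/3\}$ jointly with $N_{pil}(\mathbf{p}_1)+N_{pil}(\mathbf{p}_2)=T$, and I would double-check that $B(T)$ and $F(T)$ have the same parity as $K_1$ (equivalently that $S_0(T)$ is a nonempty arithmetic progression with common difference $2$) so that the case analysis in (\ref{def:rho_T}) is exhaustive. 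Once unimodality and the exact sign characterization are in hand, the statement (\ref{Lemma1 result}) follows immediately.
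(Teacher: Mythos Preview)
Your approach is essentially identical to the paper's: decouple via the First Finding to reduce to the one-dimensional profile $\Phi(t)$, compute the forward difference, and use monotonicity of $g_T$ to obtain unimodality and the three-case formula (the paper likewise invokes the approximation $C_{i+1}-C_i\simeq 6$ from \cite{sohn2015pilots} rather than resolving it exactly, so the obstacle you anticipate is real but is not addressed there either). One minor slip: $\Delta\Phi(t)$ has the sign of $1-g_T(t)$, not $g_T(t)-1$, though your very next clause (``$\Phi$ increasing while $g_T<1$ and decreasing once $g_T>1$'') and the ensuing case analysis are correct.
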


As stated in Lemma \ref{Lemma:Lemma1}, there exists unique optimal vector $\bm{[}\: \mathbf{p}'_{opt, K_{1}}(\rho(T)),\mathbf{p}'_{opt, K_{2}}(T-\rho(T)) \:\bm{]}$ for every $\omega$ satisfying $\log_3 \frac{\omega}{1-\omega} \notin \mathbb{Z}$. However, in the case of $\log_3 \frac{\omega}{1-\omega} \in \mathbb{Z}$, we have possibly multiple optimal solutions as specified in the following Lemma.

\begin{lemma} \label{Lemma:Lemma2}
If $\log_3 \frac{\omega}{1-\omega} \in \mathbb{Z}$, then
the set $\widetilde P_{opt}(T)$ of optimal pilot assignment vectors maximizing $C_{wsr}$ is
\begin{align}\label{Lemma2 result}
\widetilde P_{opt}(T)= \Big\{
\bm{[}\: \mathbf{p}'_{opt, K_{1}}(t),\mathbf{p}'_{opt, K_{2}}(T-t) \:\bm{]}  \:\: : & \nonumber\\ \: t \in \{\rho(T),  \rho(T)+2,  \cdots, & \mu(T)\} \Big\}
\end{align} 
where $\rho(T)$ is as defined in (\ref{def:rho_T}) and
\begin{equation*}
\mu(T) =
\begin{cases}
B(T), \ \ \ \ \ \ \ \ \ \ \ \text{if  } S_1(T) = \emptyset \text{ or } g_T(B(T)) > 1 \\
F(T), \ \ \ \ \ \ \ \ \ \ \ \ \ \ \ \ \ \  \text{else if } g_T(F(T) - 2) < 1\\
min \{ t\in S_1(T) : g_T(t) \leq 1 \} + 2, \ \ \ \ \text{otherwise.}
\end{cases}
\end{equation*}
\end{lemma}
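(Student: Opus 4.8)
\textbf{Proof proposal for Lemma \ref{Lemma:Lemma2}.}

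The plan is to reduce the two-group problem to a one-dimensional search over the single parameter $t = N_{pil}(\mathbf{p}_1)$, then characterize the set of maximizers of the resulting discrete function. First I would invoke the \emph{First Finding} of \cite{sohn2015pilots}: for any fixed pilot budget $t$ allocated to the $1^{st}$ priority group, $C_{sum}(\mathbf{p}_1)$ is maximized (uniquely, by (\ref{Thm3 result})) by $\mathbf{p}'_{opt, K_1}(t)$, and similarly $C_{sum}(\mathbf{p}_2)$ is maximized by $\mathbf{p}'_{opt, K_2}(T-t)$ for budget $T-t$. Since $C_{wsr} = \omega C_{sum}(\mathbf{p}_1) + (1-\omega) C_{sum}(\mathbf{p}_2)$ is a nonnegatively-weighted sum and the two groups' assignments are chosen independently subject only to $N_{pil}(\mathbf{p}_1) + N_{pil}(\mathbf{p}_2) = T$, maximizing $C_{wsr}$ over $\Theta(T)$ is equivalent to maximizing the scalar function $W(t) \triangleq \omega\, C_{sum}(\mathbf{p}'_{opt,K_1}(t)) + (1-\omega)\, C_{sum}(\mathbf{p}'_{opt,K_2}(T-t))$ over $t \in S_0(T)$, with the optimal vectors then being $[\mathbf{p}'_{opt,K_1}(t), \mathbf{p}'_{opt,K_2}(T-t)]$ for each maximizing $t$. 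Table \ref{Table:Pilot length pairs} and the membership constraints from \cite{sohn2015pilots} justify that $S_0(T)$ is exactly the feasible range for $t$.

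Next I would analyze the increments of $W$ along the grid $S_0(T)$, i.e. $\Delta W(t) \triangleq W(t+2) - W(t)$ for $t \in S_1(T)$. The key computational step is to show, using the closed form (\ref{Thm3 result}) for $\mathbf{p}'_{opt,K}$ and the near-linear structure $C_{i+1} - C_i \simeq 6$, that increasing the $1^{st}$-group budget from $t$ to $t+2$ raises $\omega C_{sum}(\mathbf{p}_1)$ by an amount proportional to $\omega\, 3^{-\chi(t,K_1)}$ (the marginal gain sits at the shallowest occupied depth $\chi(t,K_1)$), while decreasing the $2^{nd}$-group budget from $T-t$ to $T-t-2$ lowers $(1-\omega) C_{sum}(\mathbf{p}_2)$ by an amount proportional to $(1-\omega)\, 3^{-\chi(T-t-2,K_2)}$. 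Dividing, the sign of $\Delta W(t)$ is governed by whether $3^{\chi(t,K_1)-\chi(T-t-2,K_2)}\frac{1-\omega}{\omega} = g_T(t)$ is $\le 1$ or $\ge 1$. Because $\chi(t,K_1)$ is nondecreasing in $t$ and $\chi(T-t-2,K_2)$ is nonincreasing in $t$, the exponent $\chi(t,K_1) - \chi(T-t-2,K_2)$ is nondecreasing in $t$, hence $g_T$ is nondecreasing; therefore $\Delta W(t) \ge 0$ for small $t$ and $\Delta W(t) \le 0$ for large $t$, so $W$ is (discretely) unimodal and its argmax is an interval of consecutive grid points. This monotonicity-of-$g_T$ argument, together with the careful bookkeeping of exactly where the marginal depth lands, is the main obstacle.

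Finally I would pin down the endpoints of the argmax interval under the hypothesis $\log_3\frac{\omega}{1-\omega} \in \mathbb{Z}$. In this case $g_T(t) = 3^{\chi(t,K_1) - \chi(T-t-2,K_2) - \log_3\frac{\omega}{1-\omega}}$ is an integer power of $3$, so $g_T(t) = 1$ is possible, and at such $t$ one has $\Delta W(t) = 0$ — giving genuinely multiple optima. I would handle the trivial cases first: if $S_1(T) = \emptyset$ there is only one feasible $t = B(T) = F(T)$; if $g_T(B(T)) > 1$ then $\Delta W > 0$ throughout so the unique max is at $t = F(T)$ (consistent with $\rho(T) = \mu(T) = F(T)$ wait — rather, one checks the boundary definitions line up); if $g_T(F(T)-2) < 1$ then $\Delta W < 0$ throughout and the max is at $t = B(T)$. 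In the remaining case the smallest $t$ with $g_T(t) \ge 1$ is $\rho(T)$, and since $g_T$ is nondecreasing and takes the value $1$ on a (possibly empty, possibly longer) run of consecutive points, $W$ is constant and maximal on $\{\rho(T), \rho(T)+2, \ldots, \mu(T)\}$ where $\mu(T)$ is two steps past the largest $t$ with $g_T(t) \le 1$; I would verify $\Delta W(t) > 0$ strictly for $t < \rho(T)$, $\Delta W(t) = 0$ for $\rho(T) \le t \le \mu(T) - 2$, and $\Delta W(t) < 0$ strictly for $t \ge \mu(T)$, which exactly yields (\ref{Lemma2 result}). Uniqueness of each $\mathbf{p}'_{opt,K_i}$ from the First Finding guarantees these are all the optimal vectors.
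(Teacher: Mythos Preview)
Your approach is essentially identical to the paper's: reduce to a one-variable search over $t \in S_0(T)$ via the First Finding, compute the increment (the paper writes it as $f(t)-f(t+2)=6\,\omega\,3^{-\chi(t,K_1)}(g_T(t)-1)$ using Corollary~1 of \cite{sohn2015pilots} and $C_{i+1}-C_i\simeq 6$), then use monotonicity of $g_T$ to locate the plateau of maximizers.

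One slip to fix: in your two boundary cases you have the direction reversed. If $g_T(B(T))>1$ then, by monotonicity, $g_T(t)>1$ for every $t\in S_1(T)$, so $\Delta W(t)<0$ throughout (not $>0$); $W$ is strictly \emph{decreasing} and the unique maximizer is $t=B(T)$, which is precisely what $\rho(T)=\mu(T)=B(T)$ asserts. Symmetrically, $g_T(F(T)-2)<1$ forces $\Delta W(t)>0$ throughout and the maximum sits at $F(T)=\rho(T)=\mu(T)$. Your parenthetical ``wait'' was well-founded; with that correction the remaining ``otherwise'' case --- $\Delta W(t)>0$ for $t<\rho(T)$, $\Delta W(t)=0$ for $\rho(T)\le t\le \mu(T)-2$, $\Delta W(t)<0$ for $t\ge \mu(T)$ --- is exactly what the paper verifies in Appendix~\ref{Appendix:Proofs of Lemmas 1 and 2}.
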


We defer the proofs of Lemmas 1 and 2 to Appendix \ref{Appendix:Proofs of Lemmas 1 and 2}. 
These two lemmas lead to our first main theorem, which 
specifies an element of $\widetilde P_{opt}(T)$ as a function of $T$.

\begin{theorem} \label{Theorem:Theorem1}
For given total pilot length $T \in \{K, K+2, \cdots, LK/3\}$, using pilot assignment vector $\mathbf{p}'_{opt, K_{1}}(\rho(T))$ for $1^{st}$ priority group and $\mathbf{p}'_{opt, K_{2}}(T-\rho(T))$ for $2^{nd}$ priority group maximizes the WSR. This optimal solution allocates $\rho(T)$ pilots to $1^{st}$ priority group and $T-\rho(T)$ to $2^{nd}$ priority group. In other words, 
\begin{equation}\label{Theorem:Theorem1 result}
\bm{[}\: \mathbf{p}'_{opt, K_{1}}(\rho(T)),\mathbf{p}'_{opt, K_{2}}(T-\rho(T)) \:\bm{]} \in \widetilde P_{opt}(T).
\end{equation}
\end{theorem}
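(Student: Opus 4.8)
The plan is to derive Theorem~\ref{Theorem:Theorem1} as an immediate corollary of Lemmas~\ref{Lemma:Lemma1} and~\ref{Lemma:Lemma2}, since both lemmas already pin down $\widetilde P_{opt}(T)$ completely; the only thing left to check is that the particular pair $\bm{[}\mathbf{p}'_{opt,K_1}(\rho(T)),\mathbf{p}'_{opt,K_2}(T-\rho(T))\bm{]}$ is always a member of that set, regardless of which of the two mutually exclusive cases on $\log_3\frac{\omega}{1-\omega}$ we are in. First I would observe that the two cases are exhaustive: for fixed $\omega$ with $0.5\le\omega<1$, either $\log_3\frac{\omega}{1-\omega}\notin\mathbb{Z}$ or $\log_3\frac{\omega}{1-\omega}\in\mathbb{Z}$, so exactly one of Lemma~\ref{Lemma:Lemma1}, Lemma~\ref{Lemma:Lemma2} applies. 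In the first case, (\ref{Lemma1 result}) states that $\widetilde P_{opt}(T)$ is the singleton $\{\bm{[}\mathbf{p}'_{opt,K_1}(\rho(T)),\mathbf{p}'_{opt,K_2}(T-\rho(T))\bm{]}\}$, so (\ref{Theorem:Theorem1 result}) holds trivially. In the second case, (\ref{Lemma2 result}) describes $\widetilde P_{opt}(T)$ as the family indexed by $t\in\{\rho(T),\rho(T)+2,\cdots,\mu(T)\}$; since $t=\rho(T)$ is the smallest index in this arithmetic progression, the pair with $t=\rho(T)$ lies in $\widetilde P_{opt}(T)$, which is exactly (\ref{Theorem:Theorem1 result}).

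The one genuine verification needed is that the index set in Lemma~\ref{Lemma:Lemma2} is nonempty, i.e., that $\rho(T)\le\mu(T)$, so that $t=\rho(T)$ is actually an admissible choice. I would handle this by a short case analysis mirroring the three-branch definitions of $\rho(T)$ and $\mu(T)$ in (\ref{def:rho_T}) and in Lemma~\ref{Lemma:Lemma2}: in the first branch ($S_1(T)=\emptyset$ or $g_T(B(T))>1$) both equal $B(T)$; in the second branch both equal $F(T)$; in the third branch $\rho(T)=\min\{t\in S_1(T):g_T(t)\ge1\}$ while $\mu(T)=\min\{t\in S_1(T):g_T(t)\le1\}+2$, and monotonicity of $g_T$ (which is geometric in the exponent $\chi(t,K_1)-\chi(T-t-2,K_2)$, hence nondecreasing in $t$ on $S_1(T)$, a fact established in the proof of the lemmas) gives $\rho(T)\le\mu(T)$. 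I would also note that $T\in\{K,K+2,\cdots,LK/3\}$ guarantees $B(T)\le F(T)$ and that $\Theta(T)\ne\emptyset$, so $\widetilde P_{opt}(T)$ itself is nonempty.

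I do not anticipate a serious obstacle here: the theorem is essentially a restatement that extracts one canonical representative from the solution set characterized by the lemmas. The only mild subtlety is bookkeeping — making sure the branch conditions for $\rho(T)$ and $\mu(T)$ line up so that $\rho(T)$ is genuinely the minimal optimal value of $t$ in both lemmas, and that the claimed pilot allocation "$\rho(T)$ to group~1, $T-\rho(T)$ to group~2" is consistent with $N_{pil}(\mathbf{p}'_{opt,K_1}(\rho(T)))=\rho(T)$ and $N_{pil}(\mathbf{p}'_{opt,K_2}(T-\rho(T)))=T-\rho(T)$, which follows directly from the construction of $\mathbf{p}'_{opt,K}$ in (\ref{Thm3 result}) since that vector by definition has pilot length equal to its argument. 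Assembling these observations yields the theorem in a few lines.
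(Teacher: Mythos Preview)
Your proposal is correct and mirrors the paper's own proof, which simply notes that (\ref{Theorem:Theorem1 result}) follows immediately from (\ref{Lemma1 result}) and (\ref{Lemma2 result}) regardless of whether $\log_3\frac{\omega}{1-\omega}$ is an integer. Your additional care in verifying $\rho(T)\le\mu(T)$ so that the index set in Lemma~\ref{Lemma:Lemma2} is nonempty is a nice bit of bookkeeping the paper leaves implicit, but the approach is the same.
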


\begin{proof}[Proof\nopunct] 
From (\ref{Lemma1 result}) and (\ref{Lemma2 result}), we can confirm that (\ref{Theorem:Theorem1 result}) holds, irrespective of $\omega$ value.
\end{proof}


When $\log_3 \frac{\omega}{1-\omega} \notin \mathbb{Z}$ holds, $\widetilde P_{opt}(T)$ contains unique element 
as stated in (\ref{Lemma1 result}) or (\ref{Theorem:Theorem1 result}). In the case of $\log_3 \frac{\omega}{1-\omega} \in \mathbb{Z}$, $\widetilde P_{opt}(T)$ might have multiple elements as in (\ref{Lemma2 result}), but we can guarantee the existence of an element stated in (\ref{Theorem:Theorem1 result}).
Here, we attempt to get some insight on $\rho(T)$ in (\ref{def:rho_T}), the optimal pilot length for $1^{st}$ priority group. The following proposition suggests an alternative expression for $\rho(T)$, depending on the range of $T$.

\begin{prop}\label{Prop:rho_T expression}
Let $s = \ceil[\big]{\log_3 \frac{w}{1-w}}$.
If $3^s \geq L/3$,  
\begin{equation}\label{def:rho_T_alternative}
\rho(T) =
\begin{cases}
T-K_2 & \text{if  } K \leq T \leq K_2 +  \frac{LK_1}{3} \\
LK_1/3 & \text{if  } K_2 + \frac{LK_1}{3} < T \leq \frac{LK}{3}
\end{cases}
\end{equation}
Otherwise (i.e., $3^s < L/3$), 
\begin{equation}\label{def:rho_T_alternative2}
\rho(T) =
\begin{cases}
T-K_2 & \text{if  } K \leq T \leq K_2 +  3^s K_1 \\
\phi(T) & \text{if  } K_2 + 3^s K_1 < T < \frac{LK_1}{3} + \frac{LK_2}{3^{s+1}} \\
LK_1 / 3 & \text{if  } \frac{LK_1}{3} + \frac{LK_2}{3^{s+1}} \leq T \leq \frac{LK}{3} \\
\end{cases}
\end{equation}
where
\begin{align*}\label{def:phi_T}
\phi(T) &=
\begin{cases}
3^{V(T)+s-1} K_1 & \text{if  } T \leq 3^{V(T)+s-1}K_1 +  3^{V(T)} K_2 \\
T - 3^{V(T)} K_2 & \text{if  } T > 3^{V(T)+s-1}K_1 +  3^{V(T)} K_2, \\
\end{cases}\nonumber\\
V(T) &=\min\{i \in \{0, 1, \cdots, log_3 L -1-s \} \: : \\ 
& \ \ \ \ \ \ \ \ \ \ \ \ \ \ \ \ \ \ \ \ \ \ \ \ \ \ \ \ \ \ \ \ \ \ \ \ \ \: T \leq 3^{s+i}K_1 + 3^i K_2\}.
\end{align*}
\end{prop}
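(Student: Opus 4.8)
The plan is to collapse the three-branch definition (\ref{def:rho_T}) into a single threshold statement and then verify the closed forms (\ref{def:rho_T_alternative})--(\ref{def:rho_T_alternative2}) against it point by point. Two elementary facts do the bookkeeping. (i) The map $t\mapsto g_T(t)$ is nondecreasing on $S_0(T)$: by its definition $\chi(N,K)$ is nondecreasing in $N$, so $\chi(t,K_1)$ is nondecreasing and $\chi(T-t-2,K_2)$ is nonincreasing in $t$, hence so is the exponent in $g_T$. (ii) Since $\sum_{i=0}^{k}K3^{i}=K(3^{k+1}-1)/2$, the defining inequality of $\chi$ reduces to $3^{k+1}K>N_{p0}$, so $\chi(N,K)=m\iff 3^mK\le N<3^{m+1}K$ for $N\ge K$, equivalently $\chi(N,K)\ge m\iff N\ge 3^mK$. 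Combining (i) with a short case check of (\ref{def:rho_T}), and using that the exponent of $g_T$ is an integer together with $s=\lceil\log_3\frac{\omega}{1-\omega}\rceil$ (so that $g_T(t)\ge1\iff\chi(t,K_1)-\chi(T-t-2,K_2)\ge s\iff t\ge 3^{\chi(T-t-2,K_2)+s}K_1$), one obtains the compact characterization $\rho(T)=\min\big(\{\,t\in S_0(T):\chi(t,K_1)-\chi(T-t-2,K_2)\ge s\,\}\cup\{F(T)\}\big)$. It therefore suffices, for each claimed value $t^{\star}$, to check that $t^{\star}\in S_0(T)$ (in particular that it has the correct parity, the admissible pilot lengths for the two groups lying in $K_1+2\mathbb{Z}$ and $K_2+2\mathbb{Z}$), that $g_T(t^{\star})\ge1$ or $t^{\star}=F(T)$, and that $g_T(t^{\star}-2)<1$ or $t^{\star}=B(T)$; each of these is a single evaluation of $\chi$ via (ii).

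In the case $3^s\ge L/3$, one has $F(T)-2<LK_1/3\le 3^sK_1$, so (ii) gives $\chi(F(T)-2,K_1)\le s-1$, while $\chi(T-F(T),K_2)\ge0$; hence the exponent of $g_T$ at $F(T)-2$ is at most $s-1$, and $g_T(F(T)-2)\le 3^{s-1}\tfrac{1-\omega}{\omega}<1$ because $s-1<\log_3\frac{\omega}{1-\omega}$ by the definition of $s$. By (i) no $t<F(T)$ in $S_0(T)$ meets the threshold, so $\rho(T)=F(T)=\min(T-K_2,LK_1/3)$; evaluating the minimum (and noting $T-K_2\ge K_1$ since $T\ge K$) gives exactly the split at $T=K_2+LK_1/3$ in (\ref{def:rho_T_alternative}).

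For $3^s<L/3$ I would treat the three ranges of (\ref{def:rho_T_alternative2}) in turn, with the two outer ranges again being single-point checks. For $K\le T\le K_2+3^sK_1$ one has $F(T)=T-K_2$ (as $T-K_2\le 3^sK_1<LK_1/3$), while $\chi(F(T)-2,K_1)\le s-1$ and $\chi(K_2,K_2)=0$, so $g_T(F(T)-2)<1$ and $\rho(T)=T-K_2$. For $T\ge \tfrac{LK_1}{3}+\tfrac{LK_2}{3^{s+1}}$ one has $F(T)=LK_1/3$ (the inequality $\tfrac{LK_2}{3^{s+1}}>K_2$ is exactly $3^s<L/3$), and the lower bound on $T$ forces $\chi(T-LK_1/3,K_2)\ge \log_3L-1-s$ while $\chi(LK_1/3-2,K_1)\le \log_3L-2$, so the exponent of $g_T$ at $F(T)-2$ is again at most $s-1$ and $\rho(T)=LK_1/3$.

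The middle range $K_2+3^sK_1<T<\tfrac{LK_1}{3}+\tfrac{LK_2}{3^{s+1}}$ is the crux. The monotone crossing is governed by $t\ge 3^{\chi(T-t-2,K_2)+s}K_1$, whose right-hand side steps down in powers of $3$ as $t$ crosses the block boundaries of $\chi(T-t-2,K_2)$. I would first check that $V(T)$ is well-defined on this range (the index $\log_3L-1-s$ qualifies, by the upper bound on $T$) and that $V(T)=i$, necessarily $i\ge1$ here, is equivalent to $3^{i-1}(3^sK_1+K_2)<T\le 3^i(3^sK_1+K_2)$. One then verifies that the first $t\in S_0(T)$ with $g_T(t)\ge1$ lies in the block where $\chi(T-t-2,K_2)$ first equals $V(T)-1$, and that inside that block this $t$ is either the left endpoint $3^{V(T)+s-1}K_1$ of the corresponding $\chi(\cdot,K_1)$-block --- which happens exactly when $T\le 3^{V(T)+s-1}K_1+3^{V(T)}K_2$ --- or else equals $T-3^{V(T)}K_2$, the smallest $t$ at which $\chi(T-t-2,K_2)$ reaches $V(T)-1$; these are precisely the two lines of $\phi(T)$, and they coincide at the common boundary. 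The argument is completed by evaluating $g_T$ at $\phi(T)$ and at $\phi(T)-2$ via (ii) --- the exponent is $\ge s$ at the former and $\le s-1$ at the latter --- and by checking $\phi(T)\in[B(T),F(T)]$ with the right parity. The main obstacle is precisely this block-locating step: showing that $V(T)$ pinpoints where the nondecreasing $g_T$ crosses $1$, and handling the two sub-cases of $\phi(T)$ together with the boundary values where $\log_3(N/K)$ is an exact power; everything outside the middle range reduces to an evaluation of $\chi$ at one point.
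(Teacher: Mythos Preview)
Your proposal is correct and follows essentially the same route as the paper: both establish monotonicity of $g_T$, use the closed form $\chi(N,K)=m\iff 3^mK\le N<3^{m+1}K$, show $g_T(F(T)-2)<1$ in the outer ranges to conclude $\rho(T)=F(T)$, and in the middle range evaluate $g_T$ at $\phi(T)$ and $\phi(T)-2$ to locate the crossing (the paper also first verifies $g_T(B(T))\le 1\le g_T(F(T)-2)$ there, which you subsume into your minimum formula). Your compact reformulation $\rho(T)=\min\big(\{t\in S_0(T):\chi(t,K_1)-\chi(T-t-2,K_2)\ge s\}\cup\{F(T)\}\big)$ is a clean way to collapse the three-branch definition~(\ref{def:rho_T}), but the subsequent case split and the key $\chi$-evaluations mirror the paper's Appendix~\ref{Appendix:Proofs of Propositions 1 and 2} almost line by line.
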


\begin{table}[htbp]
	\small
	\centering
	\caption{Optimal pilot lengths for each priority group}
	\subfloat[$3^s \geq L/3$ case]{%
		\label{Table: Pattern on optimal assignment_1}
		\begin{tabular}{c|c|c}
			\hline
			$\boldsymbol{T}$ & $\boldsymbol{\rho}(\boldsymbol{T})$ & $\boldsymbol{T} - \boldsymbol{\rho}(\boldsymbol{T})$ \\

			\hline
			$K$ & $ K_1 $ & $$ \\
			$K+2$ & $K_1 + 2$ & $$ \\
			$\vdots$ & $\vdots$ & $K_2$ \\
			$K_2 + \frac{LK_1}{3}$ & $\frac{LK_1}{3}$ & $$  \\
			\hline
			$K_2 + \frac{LK_1}{3} + 2$ & $$ & $K_2 + 2$ \\
			$\vdots$ & $\frac{LK_1}{3}$ & $\vdots$ \\
			$\frac{LK}{3}$ & $$ & $\frac{LK_2}{3}$ \\
			\hline
		\end{tabular}%
	}\hspace{1cm}
	\subfloat[$3^s < L/3$ case]{%
\label{Table: Pattern on optimal assignment_2}
		\begin{tabular}{c|c|c|c|c}
			\hline
			$\boldsymbol{T}$ & $\boldsymbol{\rho}(\boldsymbol{T})$ & $\boldsymbol{T} - \boldsymbol{\rho}(\boldsymbol{T})$ & $\boldsymbol{d_1}$ & $\boldsymbol{d_2}$ \\
			\hline
			$K$ & $K_1$ & $$ & 0 &  \\
			$K+2$ & $K_1 + 2$ & $$ & 0 & \\
			$\vdots$ & $\vdots$ & $K_2$ & $\vdots$ & 0 \\
			$K_2 + 3^{s}K_1 - 2$ & $3^{s}K_1 - 2$ & $$ & $s-1$ &  \\
			$K_2 + 3^{s}K_1$ & $3^{s}K_1$ & $$ & $s$ &  \\
			\hline
			$K_2 + 3^{s}K_1 + 2$ & $$ & $K_2 + 2$ &  & 0 \\
			$\vdots$ & $3^{s}K_1$ & $\vdots$ & $s$ & $\vdots$\\
			$3 K_2 + 3^{s}K_1 - 2$ & $$ & $3 K_2 - 2 $ &  & $0$ \\
			$3 K_2 + 3^{s}K_1$ & $$ & $3 K_2 $ &  & $1$ \\
			\hline
			$3 K_2 + 3^{s}K_1 + 2$ & $3^{s}K_1 + 2$ & $$ & $s$ &  \\
			$\vdots$ & $\vdots$ & $3 K_2 $   & $\vdots$ & $1$  \\
			$3 K_2 + 3^{s+1}K_1$ & $3^{s+1}K_1$ & $$ & $s+1$ &  \\
			\hline
			$\vdots$ & $\vdots$ & $\vdots$  & $\vdots$ & $\vdots$\\
			\hline
			$\frac{LK_1}{3} + \frac{LK_2}{3^{s+1}}$ & $$ & $\frac{LK_2}{3^{s+1}}$ & & \\
			$\vdots$ & $\frac{LK_1}{3}$ & $\vdots$ & & \\
			$\frac{LK}{3}$ & $$ & $\frac{LK_2}{3}$ & & \\
			\hline
		\end{tabular}%
	}
\end{table}

The detailed proof for Proposition \ref{Prop:rho_T expression} is in Appendix \ref{Appendix:Proofs of Propositions 1 and 2}, but  here we provide a brief explanation on the result that is consistent with our intuition.
As $T$ increases, variation of $\rho(T)$ has a pattern illustrated in Tables \ref{Table: Pattern on optimal assignment_1}. 
When $\omega$ is sufficiently large ($3^{s} \geq L/3$ case, Table \ref{Table: Pattern on optimal assignment_1}), the WSR is mostly determined by the sum rate of the $1^{st}$ priority group. Therefore, the WSR-maximizing solution first allocates additional pilot resources to 
         the $1^{st}$ priority group up to its maximum pilot length $LK_1 / 3$. After the $1^{st}$ group gets maximum available pilot resources, additional pilot is dedicated to the $2^{nd}$ priority group.
In the case of $3^{s} < L/3$ (Table \ref{Table: Pattern on optimal assignment_2}), weight $\omega$ on the $1^{st}$ priority group is not large enough so that the optimal pilot resource allocation rule has an alternative pattern for two priority groups, as illustrated below.

\begin{figure}
	\centering
   \includegraphics[width=80mm]{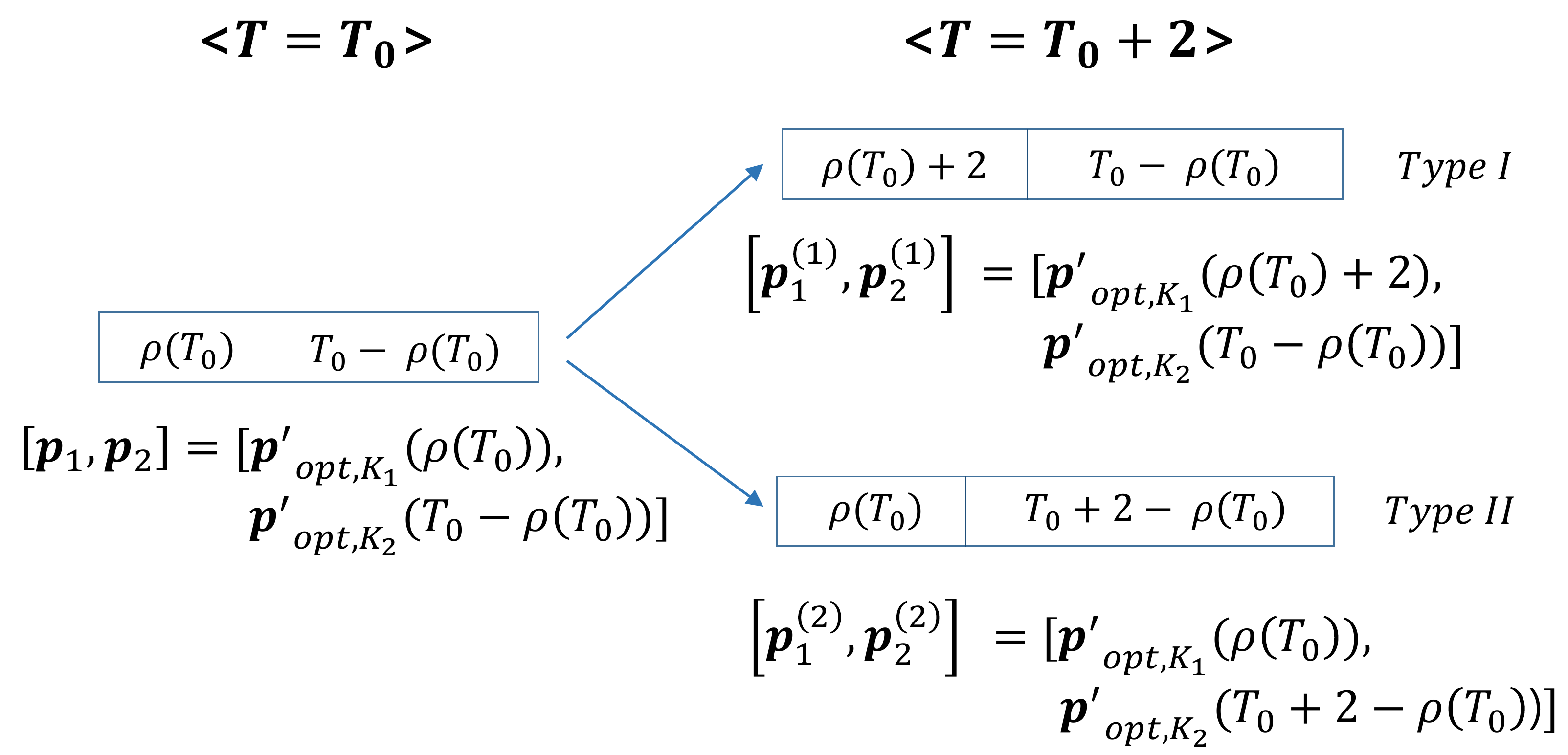}
    \caption{Analysis on Proposition 1}
    \label{Fig:proposition1_analysis}
\end{figure}


Consider the optimal assignment for $T=T_0$, as shown in left side of Fig. \ref{Fig:proposition1_analysis}. 
When additional pilot resource is available, we can consider two types of assignments for $T=T_0 + 2$: Type I allocates extra pilots to $1^{st}$ priority group, while Type II allocates extra pilots to $2^{nd}$ priority group. 
From Corollary 1 of \cite{sohn2015pilots}, $\mathbf{p}_{1}^{(1)}$ can be obtained by tossing 1 from the left-most non-zero element of $\mathbf{p}_{1}$ to increase the adjacent element by 3 (Similar relationship exists for $\mathbf{p}_{2}^{(2)}$ and $\mathbf{p}_{2})$. Denote the left-most non-zero element of $\mathbf{p}_{1}$ and $\mathbf{p}_{2}$ by $d_1 = \chi(\rho(T_0), K_1)$ and $d_2 = \chi(T-\rho(T), K_2)$, respectively. Then from the WSR expression (\ref{weighted sum rate}), Type I allocation increases the WSR by $\omega L 3^{-d_1} (C_{d_1+1}-C_{d_1})$ and Type II allocation yields an increase by $(1-\omega) L 3^{-d_2} (C_{d_2+1}-C_{d_2})$, compared to the WSR of $[\mathbf{p}_{1}, \mathbf{p}_{2}]$ allocation.



 
Therefore, when the total pilot length is $T_{0}+2$, the optimal solution is chosen by comparing $\omega L 3^{-d_1} (C_{d_1+1}-C_{d_1})$ and $(1-\omega) L 3^{-d_2} (C_{d_2+1}-C_{d_2})$. Using $C_{i+1} - C_{i} \simeq const.$, this  reduces to comparing $s$ and $d_1 - d_2$. In summary, as $T$ increases by $2$, additional $2$ pilots are assigned to either $1^{st}$ or $2^{nd}$ priority group, where the decision is based on the sign of $s - (d_1 - d_2)$. 
For example, consider $T = 3K_2 + 3^s K_1$ in Table \ref{Table: Pattern on optimal assignment_2}. Since $d_1 = s$ and $d_2 = 1$, we have $s > d_1 - d_2$. Therefore, assigning the additional pilot to $1^{st}$ priority group 
is the WSR-maximizing choice, so that $\rho(T+2) = \rho(T) + 2 = 3^s K_1 + 2$ as in the Table.
Note that $d_1$ increases as $\rho(T)$ increases, while $d_2$ increases as $T-\rho(T)$ increases. Therefore, considering the sign of $s - (d_1 - d_2)$, balanced resource allocation occurs as additional pilot resource is allowed.
This alternative allocation occurs until $T < \frac{LK_1}{3} + \frac{LK_2}{3^{s+1}}$.

When $T = \frac{LK_1}{3} + \frac{LK_2}{3^{s+1}}$, the optimal number of pilots for the $1^{st}$ priority group reaches $\frac{LK_1}{3}$, the maximum possible value. Therefore, similar to the $3^s \geq L/3$ case, for $T$ values greater than the threshold value, the additional pilot is dedicated to the $2^{nd}$ priority group.
From this observation, we can directly obtain the following proposition, which relates the consecutive $\rho(T)$ values as $T$ increases. The proof of the proposition is in Appendix \ref{Appendix:Proofs of Propositions 1 and 2}.

\begin{prop}\label{Prop:rho_T relation}
Either
$\rho(T+2)=\rho(T)$ or $\rho(T+2)=\rho(T) + 2$ holds for $T \in \{K, K+2, \cdots, \frac{LK}{3} - 2 \}$.
\end{prop}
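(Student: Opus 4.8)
The plan is to prove Proposition~\ref{Prop:rho_T relation} by a direct case analysis built on the closed-form expression for $\rho(T)$ in (\ref{def:rho_T}), exploiting monotonicity of the functions $B$, $F$, $S_1$ and $g_T$ in $T$. First I would record the elementary facts that $B(T+2) \in \{B(T), B(T)+2\}$ and $F(T+2) \in \{F(T), F(T)+2\}$, which follow immediately from $B(T) = \max(K_1, T - LK_2/3)$ and $F(T) = \min(T-K_2, LK_1/3)$; similarly the step size of the grid $S_0$ is $2$, so every candidate value of $\rho$ lives in $\{K_1, K_1+2, \dots, LK_1/3\}$ and changing $T$ by $2$ shifts the admissible window by at most $2$ at either end. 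The target claim is that the selected element of this window moves by $0$ or $2$, never backwards and never by more.

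Next I would split according to which branch of (\ref{def:rho_T}) is active at $T$ and at $T+2$. In the first branch ($S_1(T) = \emptyset$ or $g_T(B(T)) > 1$) we have $\rho(T) = B(T)$, and I would show that at $T+2$ we land in the first branch again or pick up exactly $B(T)+2$, using that $g_{T+2}(B(T+2)) $ compares to $g_T(B(T))$ in a controlled way — here the key identity is $g_T(t) = 3^{\chi(t,K_1) - \chi(T-t-2, K_2)}\frac{1-\omega}{\omega}$, so holding $t - $ (implicitly) the first argument fixed while $T$ grows only decreases the exponent via $\chi(T-t-2,K_2)$, hence $g$ is non-increasing in $T$ at a fixed $t$ and non-decreasing in $t$ at fixed $T$. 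In the third (generic) branch $\rho(T) = \min\{t \in S_1(T) : g_T(t) \ge 1\}$; since $g_T$ is non-decreasing in $t$, this minimum is the threshold crossing point, and because $g_{T+2}(t) \le g_T(t)$ the crossing point at $T+2$ is at least as large, while the grid moves up by at most $2$, pinning $\rho(T+2) \in \{\rho(T), \rho(T)+2\}$. The middle branch ($g_T(F(T)-2) < 1$, giving $\rho(T) = F(T)$) is handled by the same two monotonicities plus $F(T+2) \le F(T)+2$.

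An alternative, and probably cleaner, route is to quote Proposition~\ref{Prop:rho_T expression}: the explicit formulas (\ref{def:rho_T_alternative}) and (\ref{def:rho_T_alternative2}) express $\rho(T)$ piecewise as either the constant $LK_1/3$, the linear function $T - K_2$ (slope $1$, so increments of $2$ as $T$ increments by $2$), the constant $3^{V(T)+s-1}K_1$, or the linear piece $T - 3^{V(T)}K_2$; within each piece the increment is visibly $0$ or $2$, and at the finitely many breakpoints between pieces one checks that $\rho$ is continuous from the left in the sense that the two adjacent formulas agree (e.g.\ at $T = K_2 + 3^s K_1$ both $T - K_2$ and $3^s K_1$ give $3^s K_1$; at the transitions inside $\phi(T)$ the value $3^{V(T)+s-1}K_1$ matches $T - 3^{V(T)}K_2$ at the boundary $T = 3^{V(T)+s-1}K_1 + 3^{V(T)}K_2$), and that $V(T+2) \in \{V(T), V(T)+1\}$ so no downward jump occurs. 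Either way the statement follows; I would present the argument via Proposition~\ref{Prop:rho_T expression} for brevity, with the monotonicity argument as a fallback for the breakpoint cases.

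The main obstacle I anticipate is the bookkeeping at the breakpoints, especially in the $3^s < L/3$ regime where $\phi(T)$ alternates between a constant plateau and a linear ramp indexed by the stair-step counter $V(T)$: one must verify that when $V$ increments by $1$ the plateau level $3^{V+s-1}K_1$ jumps up by a factor of $3$ but $T$ has also grown enough that $\rho$ was already sitting on the preceding linear ramp at a matching value, so the net change is still only $2$ at that particular step. Checking that the ranges of $T$ assigned to $V(T) = i$ tile $\{K_2 + 3^s K_1, \dots, LK_1/3 + LK_2/3^{s+1}\}$ without gaps or overlaps, and that $\rho$ agrees across each seam, is the one place where the argument is more than a one-line monotonicity remark; Table~\ref{Table: Pattern on optimal assignment_2} essentially tabulates exactly these seams and can be invoked to organize the verification.
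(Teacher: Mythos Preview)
Your proposal is correct and, in the route you say you would actually present (via Proposition~\ref{Prop:rho_T expression}), it is essentially the same approach the paper takes: the paper's proof simply reads off from (\ref{def:rho_T_alternative}) and (\ref{def:rho_T_alternative2}) that on each piece $\rho(T+2)-\rho(T)\in\{0,2\}$, splitting into the same cases ($3^s \ge L/3$ versus $3^s < L/3$, and within the latter the three $T$-ranges with the $\phi$ region further split by the threshold $3^{V(T)+s-1}K_1 + 3^{V(T)}K_2$). Your breakpoint-matching discussion is more explicit than the paper's terse statement, and your first (monotonicity-of-$g_T$) route is a genuine alternative the paper does not pursue; it would give a self-contained argument directly from (\ref{def:rho_T}) without invoking Proposition~\ref{Prop:rho_T expression}, at the cost of more case bookkeeping.
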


\subsection{Optimal Total Pilot Length for Given Channel Coherence Time}

In this subsection, we solve the second sub-problem: for a given normalized coherence time $N_{coh}$, find the optimal total pilot length $T$ which maximizes the net-WSR. Here, we define the maximum WSR value under the constraint on the total pilot length $T$ as
\begin{equation*}
\bar{C}_{wsr}(T)= \underset{[\mathbf{p}_{1},\mathbf{p}_{2}]\in \Theta(T)}{\max}\ C_{wsr}(\mathbf{p}_{1},\mathbf{p}_{2}).
\end{equation*} 
By using (\ref{Theorem:Theorem1 result}), we can relate $\bar{C}_{wsr}(T)$ and $\bar{C}_{wsr}(T+2)$ as in the following Corollary, proof of which is given in Appendix \ref{Appendix:Proof of Corollary 1}.



\begin{corollary} \label{Corollary:Corollary1}
For $T \in \{K, K+2, \cdots, \frac{LK}{3} - 2 \}$, 
\begin{align*}
\bar{C}_{wsr}(T+2)
=  \bar{C}_{wsr}(T) + \delta_T
\end{align*}
holds where
\begin{align*}
\delta_T &=
\begin{cases}
\delta_T^{(2)} & T \geq \frac{LK_1}{3} + \max\{\frac{L}{3^{s+1}}, 1\} \: K_2 \\
max \{\delta_T^{(1)}, \delta_T^{(2)}\} & \textrm{otherwise,} \nonumber
\end{cases}\\
\delta_T^{(1)}&= \omega L 3^{-d_1} (C_{d_1 + 1} - C_{d_1}),\nonumber\\
\ \ \ \delta_T^{(2)}&= (1-\omega) L 3^{-d_2} (C_{d_2 + 1} - C_{d_2}),\nonumber\\
d_1 &= \chi(\rho(T), K_1), \text{and } d_2 = \chi(T-\rho(T), K_2).
\end{align*}
\end{corollary}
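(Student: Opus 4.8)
The plan is to reduce the claim to a two-way comparison, using Theorem~\ref{Theorem:Theorem1} together with Proposition~\ref{Prop:rho_T relation}. By Theorem~\ref{Theorem:Theorem1}, $\bar{C}_{wsr}(T) = C_{wsr}\big(\mathbf{p}'_{opt,K_1}(\rho(T)),\, \mathbf{p}'_{opt,K_2}(T-\rho(T))\big)$, and likewise for $T+2$ with $\rho(T+2)$ in place of $\rho(T)$. Proposition~\ref{Prop:rho_T relation} says that exactly one of two moves carries the optimal assignment for $\Theta(T)$ to the optimal assignment for $\Theta(T+2)$: the \emph{Type~I} move, which keeps $\mathbf{p}'_{opt,K_2}(T-\rho(T))$ and replaces $\mathbf{p}'_{opt,K_1}(\rho(T))$ by $\mathbf{p}'_{opt,K_1}(\rho(T)+2)$ (the case $\rho(T+2)=\rho(T)+2$), and the \emph{Type~II} move, which keeps $\mathbf{p}'_{opt,K_1}(\rho(T))$ and replaces $\mathbf{p}'_{opt,K_2}(T-\rho(T))$ by $\mathbf{p}'_{opt,K_2}(T-\rho(T)+2)$ (the case $\rho(T+2)=\rho(T)$).

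The second step computes the change of $C_{wsr}$ under each move. Raising the pilot length of a single priority group by $2$ starting from $\mathbf{p}'_{opt,K}(\cdot)$ amounts, by Corollary~1 of \cite{sohn2015pilots}, to removing one unit from the left-most nonzero coordinate $d=\chi(\cdot,K)$ and adding three units to coordinate $d+1$; substituting this into (\ref{weighted sum rate}) is exactly the computation carried out in the analysis following Proposition~\ref{Prop:rho_T expression}. Hence the Type~I move changes the WSR by $\delta_T^{(1)}=\omega L 3^{-d_1}(C_{d_1+1}-C_{d_1})$ with $d_1=\chi(\rho(T),K_1)$, and the Type~II move by $\delta_T^{(2)}=(1-\omega)L 3^{-d_2}(C_{d_2+1}-C_{d_2})$ with $d_2=\chi(T-\rho(T),K_2)$. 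Combined with the first step, $\bar{C}_{wsr}(T+2)-\bar{C}_{wsr}(T)\in\{\delta_T^{(1)},\delta_T^{(2)}\}$.

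The third step decides which value is attained. Both the Type~I and Type~II assignments belong to $\Theta(T+2)$ provided they are \emph{feasible}, i.e.\ provided $\rho(T)+2\le LK_1/3$ and $T-\rho(T)+2\le LK_2/3$ respectively; and by Theorem~\ref{Theorem:Theorem1} and Proposition~\ref{Prop:rho_T relation} the $\Theta(T+2)$-optimum is whichever of the two feasible candidates has the larger $C_{wsr}$, so when both are feasible the increment equals $\max\{\delta_T^{(1)},\delta_T^{(2)}\}$. Finally, I would invoke Proposition~\ref{Prop:rho_T expression}: reading off (\ref{def:rho_T_alternative}) and (\ref{def:rho_T_alternative2}), $\rho(T)$ reaches its cap $LK_1/3$ exactly when $T\ge \frac{LK_1}{3}+\max\{\frac{L}{3^{s+1}},1\}K_2$ (here $\max\{\frac{L}{3^{s+1}},1\}=1$ precisely in the $3^s\ge L/3$ regime), so Type~I is infeasible there and only $\delta_T^{(2)}$ survives; whereas for $T\le \frac{LK}{3}-2$ the value $T-\rho(T)$ stays strictly below $LK_2/3$ (it equals $LK_2/3$ only at $T=\frac{LK}{3}$), so Type~II is always feasible. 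This is exactly the case split stated for $\delta_T$.

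I expect the main obstacle to be the bookkeeping in the increment computation at the pilot lengths where the first nonzero position $\chi(\cdot,K)$ jumps --- which happens precisely when the pilot length crosses a boundary $\sum_{i=0}^{k}K3^i$ --- since there one must check that the ``toss one unit'' description from Corollary~1 of \cite{sohn2015pilots} still produces the clean increment with $d$ the correct (new) left-most nonzero index; and, relatedly, confirming that the feasibility thresholds that drop out of Proposition~\ref{Prop:rho_T expression} coincide \emph{exactly} --- not merely up to a bounded shift --- with the threshold $\frac{LK_1}{3}+\max\{\frac{L}{3^{s+1}},1\}K_2$ appearing in the Corollary.
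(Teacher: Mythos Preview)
Your proposal is correct and follows essentially the same approach as the paper's proof: the paper also invokes Theorem~\ref{Theorem:Theorem1} and Propositions~\ref{Prop:rho_T expression}--\ref{Prop:rho_T relation} to reduce $\bar{C}_{wsr}(T+2)$ to the larger of the two candidates $[\mathbf{p}'_{opt,K_1}(\rho(T)+2),\mathbf{p}'_{opt,K_2}(T-\rho(T))]$ and $[\mathbf{p}'_{opt,K_1}(\rho(T)),\mathbf{p}'_{opt,K_2}(T+2-\rho(T))]$, computes the respective increments $\delta_T^{(1)},\delta_T^{(2)}$ via Corollary~1 of \cite{sohn2015pilots}, and uses Proposition~\ref{Prop:rho_T expression} to identify the range where $\rho(T)=LK_1/3$ is already saturated so that only $\delta_T^{(2)}$ remains. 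The only organizational difference is that the paper splits on the threshold $\frac{LK_1}{3}+\max\{\frac{L}{3^{s+1}},1\}K_2$ at the outset rather than at the end; your anticipated ``obstacles'' about $\chi$-jumps and threshold matching are not genuine difficulties here, since the increment formula from \cite{sohn2015pilots} already handles all valid pilot lengths uniformly and the threshold is read off directly from (\ref{def:rho_T_alternative})--(\ref{def:rho_T_alternative2}).
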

Note that $\{C_{i}\}$ is defined in Section III-A. Corollary \ref{Corollary:Corollary1} implies that $\bar{C}_{wsr}(T)$ is an increasing function of $T$. Now, using Corollary \ref{Corollary:Corollary1}, we proceed to find the solution for the optimal total pilot length for a given $N_{coh}$, as stated in Theorem \ref{Theorem:Theorem2} below. Define 
\begin{align*}
h_{T}(&N_{coh})\\
&\triangleq C_{net,wsr}(\mathbf{p}'_{opt, K_{1}}(\rho(T)),\mathbf{p}'_{opt, K_{2}}(T-\rho(T)), N_{coh}) \\ 
&= \frac{N_{coh}- T}{N_{coh}} \bar{C}_{wsr}(T),
\end{align*} 
which is the maximum net-WSR value for a given pilot length $T$. Note that $h_{T}(N_{coh})$ is an increasing function of $N_{coh}$, which is positive for $N_{coh} > T$ and saturates to $\bar{C}_{wsr}(T)$ as $N_{coh}$ goes to infinity.
Consider a plot of this function for $T = K, K+2, \cdots, LK/3$. Using the fact that $\bar{C}_{wsr}(T)$ is an increasing function of $T$
, we can check that
the curve $h_{T}(N_{coh})$ is above other curves for $K\Delta_{\frac{T-K}{2}} \leq N_{coh} < K\Delta_{\frac{T-K}{2}+1}$ where
\begin{equation*} 
\Delta_n = 
\begin{cases}
0, & n=0\\
\Big\{ 2n+K + \frac{2\bar{C}_{wsr}(2n+K+2)}{\delta_{2n+K}} \Big\} \Big/ K, & 1 \leq n \leq N_{L} \\
\infty, & n=N_{L}+1.
\end{cases}
\end{equation*}
and $N_{L}=\frac{LK/3 - K}{2}$ (detailed analysis is in Appendix \ref{Appendix:Proof of Theorem 2}).
Therefore, $K\Delta_n$ represents consecutive $N_{coh}$ values where optimal assignment changes. Using this definition, we now state our second main theorem, specifying $\mathbf{p}_{opt}(N_{coh})$. The proof of the theorem is in Appendix \ref{Appendix:Proof of Theorem 2}.


\begin{theorem} \label{Theorem:Theorem2}
If the given normalized coherence time $N_{coh}$ satisfies $\Delta_{n} \leq N_{coh}/K < \Delta_{n+1}$, the optimal pilot assignment $\mathbf{p}_{opt}(N_{coh})=[\mathbf{p}_{opt}^{(1)}(N_{coh}), \mathbf{p}_{opt}^{(2)}(N_{coh})]$ that maximizes net-WSR $C_{net,wsr}$ has the following form:
\begin{align}\label{Theorem:Theorem2 result}
\mathbf{p}_{opt}^{(1)}(N_{coh}) &= \mathbf{p}'_{opt, K_1}(\rho(2n+K))\nonumber\\
\mathbf{p}_{opt}^{(2)}(N_{coh}) &= \mathbf{p}'_{opt, K_2}(2n+K - \rho(2n + K)).
\end{align}
Also, the optimal number of pilots is 
\begin{equation*}
N_{pil}(\mathbf{p}_{opt}^{(1)}(N_{coh})) + N_{pil}(\mathbf{p}_{opt}^{(2)}(N_{coh})) = 2n + K.
\end{equation*}
\end{theorem}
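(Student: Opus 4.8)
The plan is to split the search into two stages: first collapse the two-dimensional optimization over $[\mathbf{p}_1,\mathbf{p}_2]\in\widetilde P_{L,K,\alpha}$ to a one-dimensional one over the total pilot length $T$, and then optimize over $T$. By Theorem~\ref{Theorem:Theorem1}, for each admissible $T\in\{K,K+2,\dots,LK/3\}$ the pair $[\mathbf{p}'_{opt,K_1}(\rho(T)),\mathbf{p}'_{opt,K_2}(T-\rho(T))]$ lies in $\widetilde P_{opt}(T)$ and has total pilot length $\rho(T)+(T-\rho(T))=T$, so it attains $\bar{C}_{wsr}(T)$; since $C_{net,wsr}(\mathbf{p}_1,\mathbf{p}_2,N_{coh})=\frac{N_{coh}-T}{N_{coh}}\,C_{wsr}(\mathbf{p}_1,\mathbf{p}_2)$ for every $[\mathbf{p}_1,\mathbf{p}_2]\in\Theta(T)$, the best net-WSR attainable with total pilot length $T$ equals $h_T(N_{coh})$. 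Hence $\max_{[\mathbf{p}_1,\mathbf{p}_2]}C_{net,wsr}=\max_T h_T(N_{coh})$, and the moment I show the maximizing $T$ is $2n+K$, the displayed form of $\mathbf{p}_{opt}^{(1)}(N_{coh}),\mathbf{p}_{opt}^{(2)}(N_{coh})$ follows at once, as does $N_{pil}(\mathbf{p}_{opt}^{(1)})+N_{pil}(\mathbf{p}_{opt}^{(2)})=2n+K$, because $\mathbf{p}'_{opt,K}(N_{p0})$ has pilot length exactly $N_{p0}$.

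Second, I would compare consecutive curves $h_T$ and $h_{T+2}$ for consecutive admissible $T$. Writing $h_T(N_{coh})=\bar{C}_{wsr}(T)\bigl(1-T/N_{coh}\bigr)$ and invoking Corollary~\ref{Corollary:Corollary1} in the form $\bar{C}_{wsr}(T+2)=\bar{C}_{wsr}(T)+\delta_T$ with $\delta_T>0$, a one-line computation gives $N_{coh}\,[\,h_{T+2}(N_{coh})-h_T(N_{coh})\,]=(N_{coh}-T-2)\,\delta_T-2\bar{C}_{wsr}(T)$, which is strictly increasing in $N_{coh}$ and changes sign exactly at $\xi_T:=T+2+2\bar{C}_{wsr}(T)/\delta_T=T+2\bar{C}_{wsr}(T+2)/\delta_T$. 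So each consecutive pair crosses exactly once, with $h_T$ on top for $N_{coh}<\xi_T$ and $h_{T+2}$ on top for $N_{coh}>\xi_T$ --- the natural ``more pilots pay off once the coherence time is long enough'' behavior. Unwinding the definition of the thresholds $\Delta_n$ in the statement shows that the scaled $K\Delta_n$ are exactly these crossing points, so that $[K\Delta_n,K\Delta_{n+1})$ is precisely the $N_{coh}$-interval bounded below by the crossing of $h_{2n+K}$ with $h_{2n+K-2}$ and above by its crossing with $h_{2n+K+2}$.

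Third, I would upgrade the pairwise picture to a global one. The key lemma is that $\xi_T$ is strictly increasing in $T$, and its heart is that $\delta_T$ is non-increasing in $T$: by Corollary~\ref{Corollary:Corollary1}, $\delta_T$ equals $\delta_T^{(2)}$ (or $\max\{\delta_T^{(1)},\delta_T^{(2)}\}$), each $\delta_T^{(i)}$ being a positive constant times $3^{-d_i}(C_{d_i+1}-C_{d_i})$ with $d_1=\chi(\rho(T),K_1)$ and $d_2=\chi(T-\rho(T),K_2)$; since $\chi(\cdot,K)$ is non-decreasing and, by Proposition~\ref{Prop:rho_T relation}, both $\rho(T)$ and $T-\rho(T)$ are non-decreasing in $T$, the exponents $d_1,d_2$ are non-decreasing, so --- using the near-constancy of the gaps $C_{i+1}-C_i$ from Section~\ref{Section:Preliminaries} --- each $\delta_T^{(i)}$, hence $\delta_T$, is non-increasing. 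Combining $\delta_{T+2}\le\delta_T$ with $\bar{C}_{wsr}(\cdot)>0$ in the formula for $\xi_T$ then yields $\xi_{T+2}>\xi_T$ after a short inequality chase. With single crossings and strictly ordered thresholds, a telescoping argument finishes the proof: for $N_{coh}/K\in[\Delta_n,\Delta_{n+1})$ one chains the inequalities $h_T\le h_{T+2}$ (valid because $N_{coh}\ge\xi_T$ for every admissible $T<2n+K$) up to $h_{2n+K}$, and the inequalities $h_T\le h_{T-2}$ (valid because $N_{coh}\le\xi_{T-2}$ for every admissible $T>2n+K$) down to $h_{2n+K}$, giving $h_{2n+K}(N_{coh})\ge h_T(N_{coh})$ for all admissible $T$; the endpoint cases $n=0$ and $n=N_L$ (where $2N_L+K=LK/3$ and $\Delta_{N_L+1}=\infty$) just truncate one arm of the chain. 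Finally, $\Delta_n K>2n+K$ (again from $\bar{C}_{wsr}>0$) guarantees $N_{coh}>2n+K$, so the optimal net-WSR is genuinely positive.

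I expect the monotonicity of $\xi_T$ (equivalently of $\delta_T$) to be the main obstacle: it is the only place where Proposition~\ref{Prop:rho_T relation}, the explicit expression for $\delta_T$ in Corollary~\ref{Corollary:Corollary1}, and the growth profile $C_{i+1}-C_i\simeq\mathrm{const}$ all have to be used together, and it is precisely what forces the optimal pilot length to step through the admissible values $K,K+2,\dots,LK/3$ one by one rather than skipping any. The remaining ingredients --- the reduction via Theorem~\ref{Theorem:Theorem1}, the single-crossing computation, the telescoping, and the bookkeeping identifying the $\xi_T$ with the $K\Delta_n$ --- are routine.
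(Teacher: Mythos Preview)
Your approach is essentially the same as the paper's: reduce to comparing the curves $h_T(N_{coh})$ via Theorem~\ref{Theorem:Theorem1}, compute the pairwise crossing points $\xi_T=T+2+2\bar C_{wsr}(T)/\delta_T$, and read off the optimal $T$ from which interval $N_{coh}$ falls in. The paper's proof is in fact a brief sketch that stops at the crossing-point computation and appeals to Fig.~\ref{Fig:Thm2 proof graphics}; your third step (showing $\delta_T$ is non-increasing via Proposition~\ref{Prop:rho_T relation} and the near-constancy of $C_{i+1}-C_i$, hence $\xi_T$ is strictly increasing, hence the telescoping works) fills in exactly the monotonicity that the paper leaves implicit.
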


For any positive $N_{coh}$ value, there exists an integer $\\0 \leq n \leq N_{L}$ such that 
$N_{coh} \in [K\Delta_n , K\Delta_{n+1})$. 
The optimal assignment for a given $N_{coh}$ is specified by the corresponding $n$ values, as in (\ref{Theorem:Theorem2 result}). In other words, the optimal pilot assignment for the $1^{st}$ priority group turns out to be $\mathbf{p}'_{opt, K_1}(\rho(2n+K))$, while for the $2^{nd}$ priority group we have $\mathbf{p}'_{opt, K_2}(2n+K - \rho(2n + K))$. Combining with (\ref{Thm3 result}), we can obtain the exact components of the optimal pilot vector.
Also, we can figure that the optimal number of pilots utilized in the system is $N_{pil}(\mathbf{p}_{opt}^{(1)}(N_{coh}))+N_{pil}(\mathbf{p}_{opt}^{(2)}(N_{coh}))=2n+K$ for $N_{coh}$ values satisfying $\Delta_{n} \leq N_{coh}/K < \Delta_{n+1}$.




\section{Simulation Results}\label{Section:Simulation}


In this section, the effect of utilizing optimal assignment is analyzed, based on simulation results. The optimal solution depends on $C_{i}$ values in (\ref{sum rate}), which need to be obtained by simulation. $C_{i}$ depends on $\beta$ terms in (\ref{achievableR}), which is a function of distance between interfering users. Since the location of each user within a cell is assumed to be uniform-random, the $\beta$ terms need to be generated in pseudo-random manner by simulation. 
Following the settings in \cite{bjornson2016massive}, the signal decay exponent is $\gamma = 3.7$, the cell radius is $r$ meters, and the cell-hole radius is $0.14r$.
Note that $\{C_i\}$ values do not depend on $r$. The computation of $C_{i}$ values proceeds by taking average of 100,000 pseudo-random trials on user location.  
The system with $L=81$ cells and $K=10$ users in each cell was considered.


Based on the simulation result of $C_{i}$, WSR values for various possible pilot assignments can be computed. From the WSR data, the optimal pilot assignment which maximizes net-WSR $C_{net,wsr}$ for a given $N_{coh}$ can be obtained, as listed in Table \ref{Table:Optimal assignment}. 
$K=10$ and $\alpha=0.2$ imply
that each cell has 2 users with $1^{st}$ priority and 8 users with $2^{nd}$ priority.
We can check that the list of Table \ref{Table:Optimal assignment} is consistent with the mathematical result in Theorem \ref{Theorem:Theorem2}.
For example, when $19 \leq N_{coh} < 23$ (or $\Delta_1 = 1.9 \leq N_{coh}/K < 2.3 = \Delta_2$), according to (\ref{Theorem:Theorem2 result}) with $n=1$, we have $\mathbf{p}_{opt}^{(1)}(N_{coh})=\mathbf{p}'_{opt, K_1}(\rho(K+2))=\mathbf{p}'_{opt, K_1}(4)=(1,3,0,0)$, 
where the last two equalities are from (\ref{def:rho_T}) and (\ref{Thm3 result}), respectively.
Similarly, we obtain 
$\mathbf{p}_{opt}^{(2)}(N_{coh}) = (8,0,0,0)$ which coincide with the result of Table \ref{Table:Optimal assignment}.
In practice, utilizing $N_{pil}(\mathbf{p_{1}}) + N_{pil}(\mathbf{p_{2}}) = 12$ pilots by applying
pilot assignment $(1,3,0,0)$ for $1^{st}$ priority group and $(8,0,0,0)$ for $2^{nd}$ priority group 
maximizes the net-WSR of the system, when the given $N_{coh}$ satisfies $\Delta_1 = 1.9\leq N_{coh}/K < 2.3 = \Delta_2$.

\begin{table}
	\small
\caption{Optimal pilot assignment ($L=81,K=10, \alpha=0.2, \omega=0.7$)}
\centering
\label{Table:Optimal assignment}
\begin{tabular}{c|c|c|c}
\hline
 & $\mathbf{p}_{1}$ & $\mathbf{p}_{2}$ & $N_{pil}(\mathbf{p}_{1}) $\tabularnewline
 $N_{coh}/K$ & $=\mathbf{p}_{opt}^{(1)}(N_{coh})$  & $=\mathbf{p}_{opt}^{(2)}(N_{coh})$  & $+ N_{pil}(\mathbf{p}_{2})$\tabularnewline

\hline
$0\sim 1.9$ & $(2,0,0,0)$ & $(8,0,0,0)$ & 10 \tabularnewline
$1.9\sim 2.3$ & $(1,3,0,0)$ & $(8,0,0,0)$ & 12 \tabularnewline
$2.3\sim 4.3$ & $(0,6,0,0)$ & $(8,0,0,0)$ & 14 \tabularnewline
$4.3\sim 4.7$ & $(0,6,0,0)$ & $(7,3,0,0)$ & 16 \tabularnewline
$4.7\sim 5.1$ & $(0,6,0,0)$ & $(6,6,0,0)$ & 18 \tabularnewline
$\vdots$ & $\vdots$ & $\vdots$ & $\vdots$ \tabularnewline
$122\sim$ & $(0,0,0,54)$ & $(0,0,0,216)$ & 270 \tabularnewline
\hline
\end{tabular}
\end{table}


A certain trend in optimal pilot assignment is revealed in Table \ref{Table:Optimal assignment}. As $N_{coh}/K$ value increases, optimal pilot assignment for one priority group changes while optimal assignment for the other priority group remains the same. For example, optimal pilot assignment for $1^{st}$ priority group changes from $(1,3,0,0)$ to $(0,6,0,0)$ while 
optimal assignment for $2^{nd}$ priority group remains as $(8,0,0,0)$, when $N_{coh}/K$ value changes near $2.3$. 
On the other hand, optimal pilot assignment for $1^{st}$ priority group remains as $(0,6,0,0)$, while optimal assignment for $2^{nd}$ priority group changes from $(7,3,0,0)$ to $(6,6,0,0)$, when $N_{coh}/K$ value changes near $4.7$. 
This result is consistent with the message of Theorem \ref{Theorem:Theorem2}. It is shown that the optimal assignment has the form of (\ref{Theorem:Theorem2 result}), with $n$ increasing as $N_{coh}/K$ grows. Combining with Proposition \ref{Prop:rho_T relation}, we see that the optimal pilot length of the $1^{st}$ priority group either increases by $2$ or remains the same, as $n$ grows. In other words, as $N_{coh}/K$ increases, we are allowed to use two extra pilots, and which group to allocate the extra pilots is based on $\alpha$ and $\omega$.


\begin{figure}[!t]
\centering
        \includegraphics[height=40mm]{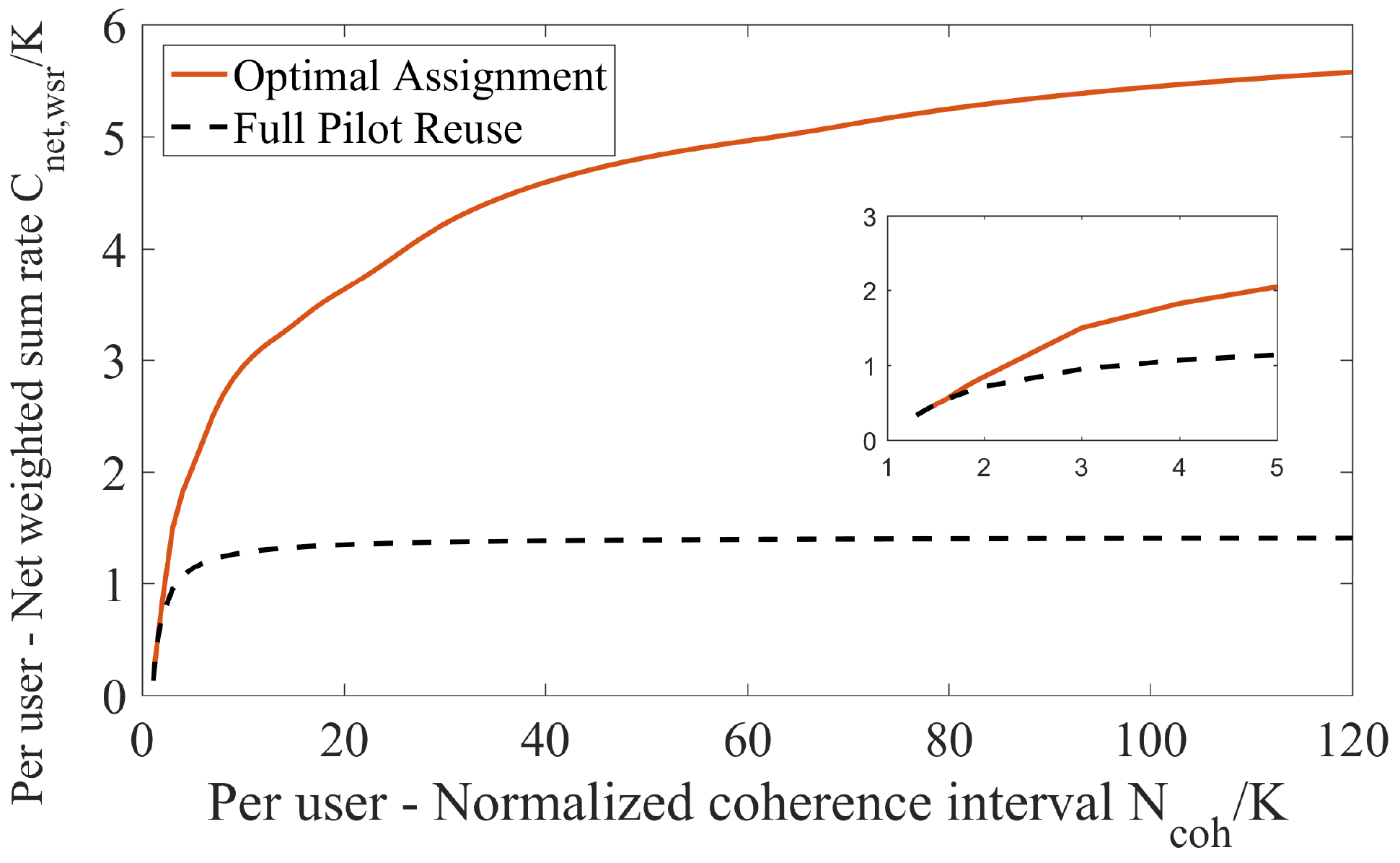}
    \caption{Per-user Net-WSR for optimal/conventional pilot assignments ($L=81,K=10, \alpha=0.2, \omega=0.9$)}
    \label{Fig:alpha=0.2_omega=0.9_netWSR_Simulation}
    
\end{figure}



Now, we compare the net-WSR values of optimal assignment and conventional assignment. Here, conventional assignment means assigning $K$ orthogonal pilots reused in each cell.
Fig. \ref{Fig:alpha=0.2_omega=0.9_netWSR_Simulation} shows net-WSR comparison for $L=81, K=10, \alpha=0.2$, and $\omega=0.9$. As per user - normalized coherence interval $N_{coh}/K$ increases, optimal assignment has substantial net-WSR gains compared to conventional assignment.
In the inset of Fig. \ref{Fig:alpha=0.2_omega=0.9_netWSR_Simulation}, we have a plot focusing on low $N_{coh}/K$ values. 
For $N_{coh}/K$ greater than $1.7$, optimal assignment beats the conventional full reuse. In the case of maximizing per-user net-rate $C_{net}/K$ in \cite{sohn2015pilots}, optimal assignment beats full reuse for $N_{coh}/K$ greater than $6.2$. Therefore, departing from conventional wisdom is beneficial for a wider $N_{coh}/K$ range, compared to the scenario where maximizing the net-rate was the objective.  A considerable increase of per-user net-WSR is seen even at low $N_{coh}/K$ values. For $N_{coh}/K=5, 10$ and $20$, optimal assignment has $79.8\%$, $130.2\%$ and $169.0\%$ higher net-WSRs than conventional assignment.

\begin{figure}
	\centering
	\subfloat[][$\alpha=0.2, \omega=0.8$]{\includegraphics[height=45mm ]{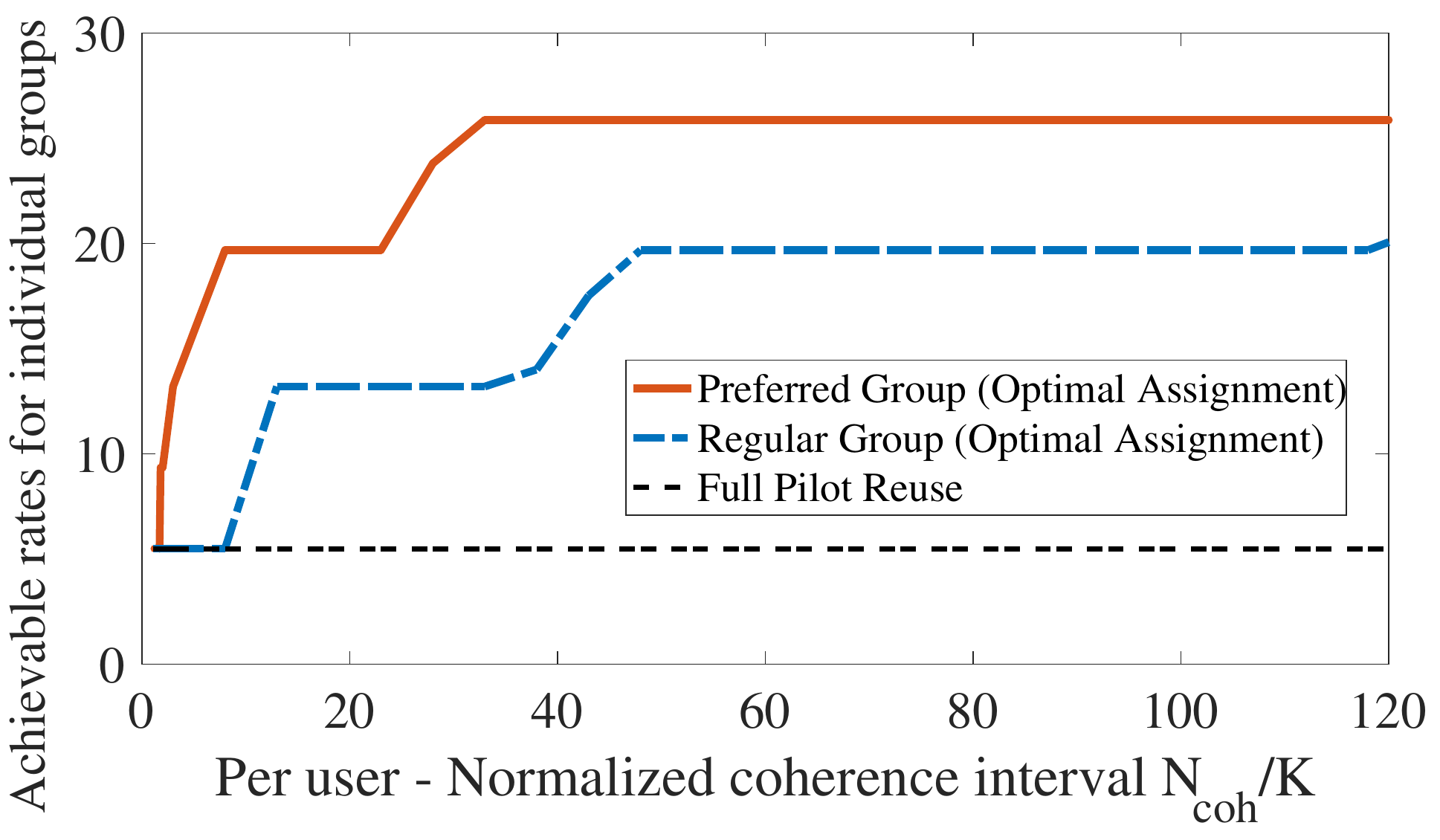}\label{Fig:alpha=0.2_omega=0.8_indiv_rates}}
	\quad \quad
	\subfloat[][$\alpha=0.2, \omega=0.6$]{\includegraphics[height=45mm]{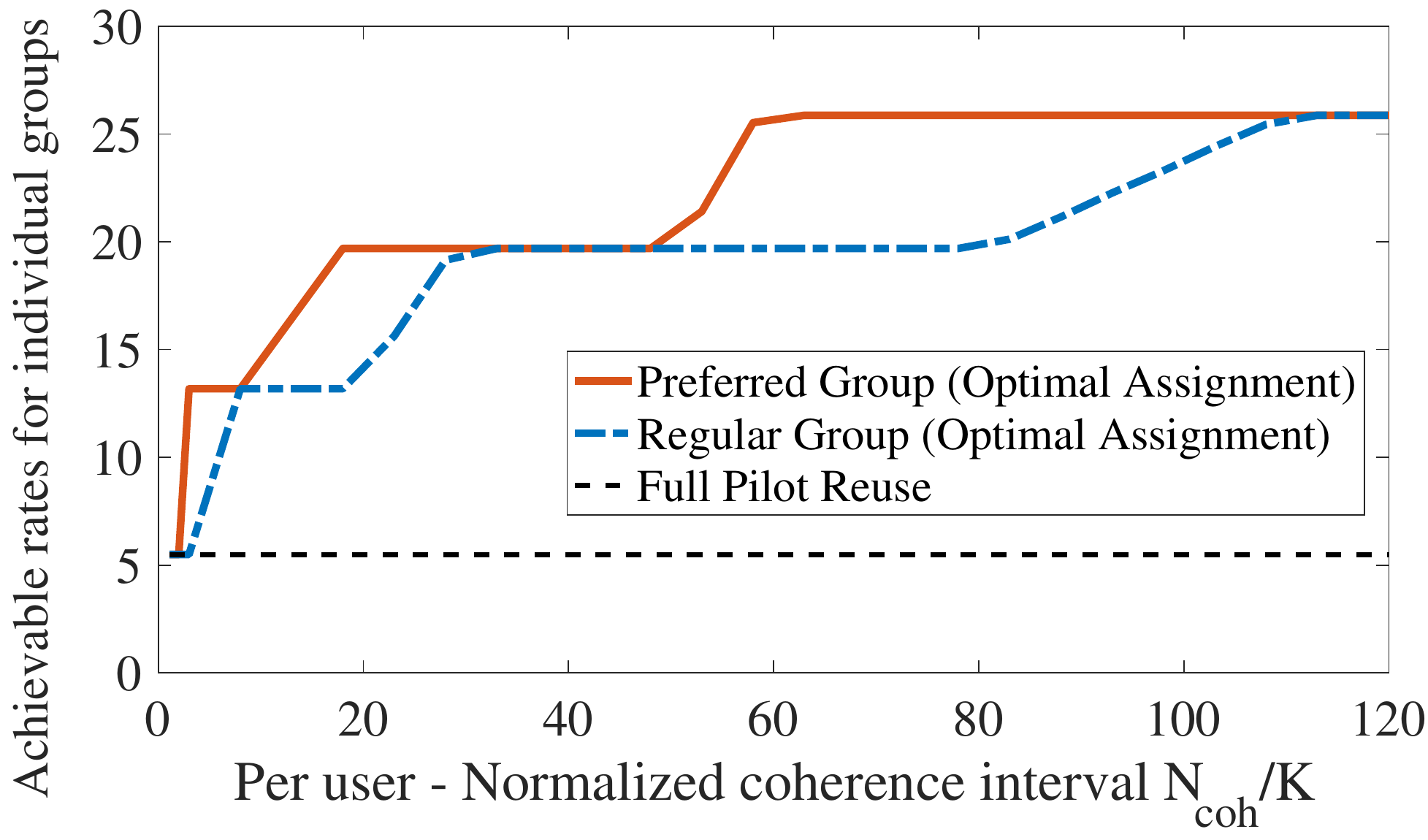}\label{Fig:alpha=0.2_omega=0.6_indiv_rates}}
		\quad \quad
	\subfloat[][$\alpha=0.4, \omega=0.8$]{\includegraphics[height=45mm]{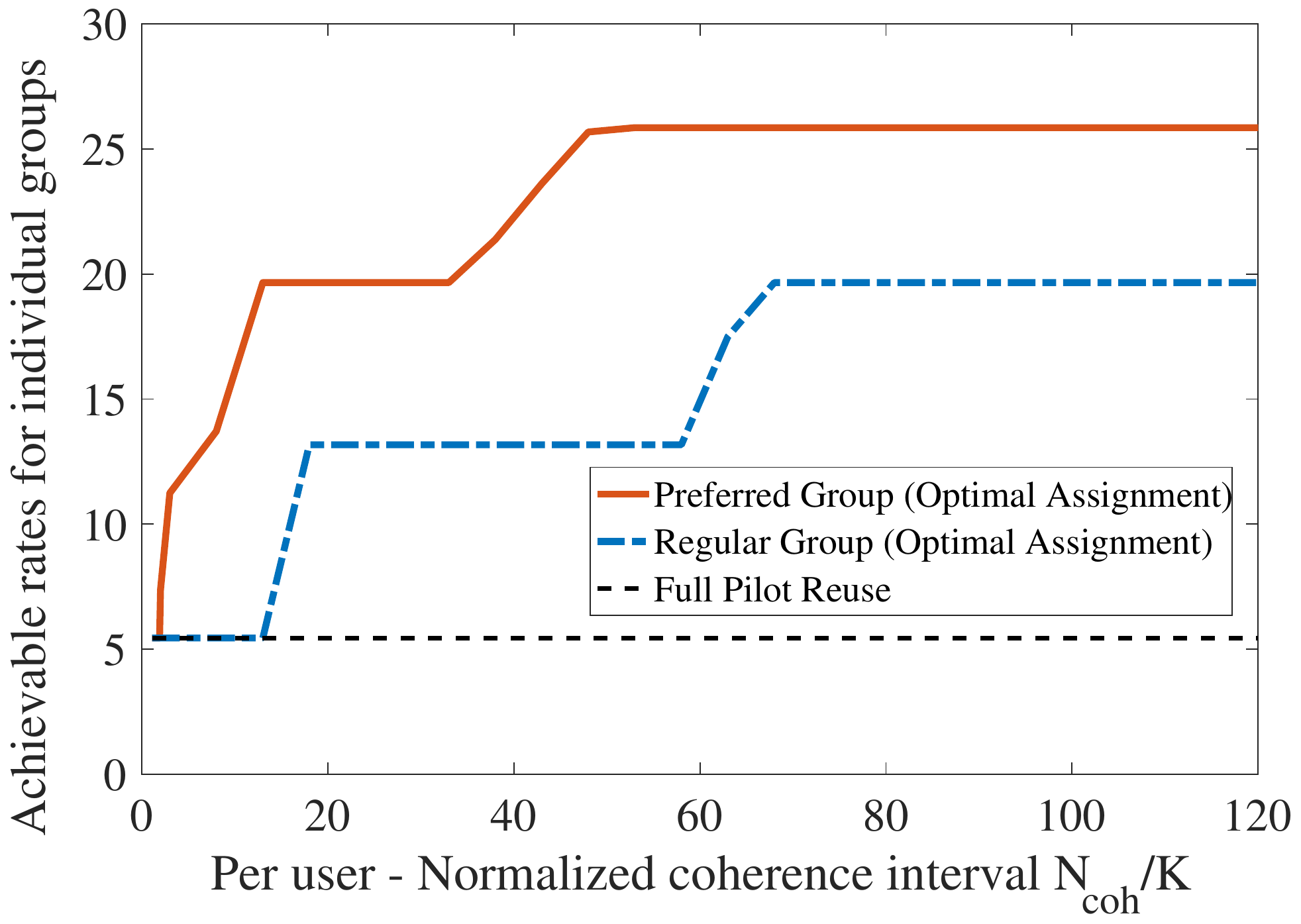}\label{Fig:alpha=0.4_omega=0.8_indiv_rates}}
	\caption{Per-user achievable rates of each priority group for optimal/conventional pilot assignments ($L=81,K=10$)}
	\label{blahblah}
\end{figure}


Now, we analyze the achievable rate averaged over each priority group. Figs. \ref{Fig:alpha=0.2_omega=0.8_indiv_rates}, \ref{Fig:alpha=0.2_omega=0.6_indiv_rates} and \ref{Fig:alpha=0.4_omega=0.8_indiv_rates} illustrate per-user achievable rates in the case of using conventional assignment versus optimal assignment, with different $\alpha$ and $\omega$ settings. 
In the case of conventional assignment, every user has a constant achievable rate value irrespective of $N_{coh}/K$, since the assignment rule is fixed so that the effect of pilot contamination does not change.
However, in the case of choosing optimal assignment specific to given $N_{coh}/K$ to maximize the net-WSR, each priority group can enjoy a higher achievable rate by utilizing extra pilots and spreading pilot-sharing users, which reduces the pilot contamination effect.
The allocation of extra pilots to the preferred group or the regular group depends on system setting: $\alpha$ and $\omega$.

Consider a case where $\alpha$ is fixed to $0.2$ and compare achievable rate curves for $\omega=0.8$ and $0.6$ (Figs. \ref{Fig:alpha=0.2_omega=0.8_indiv_rates} and \ref{Fig:alpha=0.2_omega=0.6_indiv_rates}, respectively).
We can observe that a larger $\omega$ value implies a wider performance gap between the preferred group and the regular group. This can be explained as follows. As we have a higher $\omega$ value, the influence of preferred group performance to the cost function (net-WSR) increases, so that it is more preferable to use extra pilots for $1^{st}$ priority group rather than $2^{nd}$ priority group, in order to maximize the net-WSR (note that the performance gap converges to zero, since the individual rates cannot exceed $C_{\log_3 L - 1}$ as mentioned in Section III-A).
Now, consider a scenario where $\omega$ is fixed to $0.8$ and compare the achievable rate curves for $\alpha=0.2$ and $0.4$ (Figs. \ref{Fig:alpha=0.2_omega=0.8_indiv_rates} and \ref{Fig:alpha=0.4_omega=0.8_indiv_rates}, respectively).
We can observe that as $\alpha$ increases, we have more users under $1^{st}$ priority, which requires more resources (coherence time) to boost up the achievable rate of the preferred group. Consequently, the achievable rate of the regular group increases slowly as $\alpha$ increases, as shown in the figures.


\section{Further Comments}\label{Section:Further Comments}

\subsection{Comments on Multiple Priority Groups}

The scenario of having multiple priority groups (greater than two) is considered in this subsection. Based on the mathematical analysis and simulation result obtained from the two priority group case, the result for the generalized setting can be anticipated. For the general case of $n$ priority groups, denote weight/ratio of $i^{th}$ priority group as $\omega_i$ and $\alpha_i$, respectively. The net-WSR maximization problem can be formulated as finding
\begin{align*}
\mathbf{p}_{opt}(N_{coh}) &= [\mathbf{p}_{opt}^{(1)}(N_{coh}), \cdots \mathbf{p}_{opt}^{(n)}(N_{coh})] \\
&\triangleq \underset{[\mathbf{p}_{1},\cdots, \mathbf{p}_{n}]}{\arg\max}\ C_{net,wsr}(\mathbf{p}_{1},\cdots,\mathbf{p}_{n},N_{coh})
\end{align*}
where $\mathbf{p}_{i} \in P_{L,\alpha_i K}$ for $i=1,\cdots,n$.
Here, 
\begin{align*} 
C_{net,wsr}(\mathbf{p}_{1},&\cdots,\mathbf{p}_{n}, N_{coh})\nonumber\\
&= \frac{N_{coh}- \sum_{i=1}^{n}N_{pil}(\mathbf{p}_i)}{N_{coh}} C_{wsr}(\mathbf{p}_{1},\cdots, \mathbf{p}_{n})
\end{align*}
and 
$C_{wsr}(\mathbf{p}_{1},\cdots, \mathbf{p}_{n}) = \sum_{i=1}^{n}\omega_i C_{sum}(\mathbf{p}_i).
$

\begin{table}
	\small
	\caption{Optimal pilot lengths for each priority group ($3$ priority group case - $L=27$, $K=10$, $\alpha_1 = \omega_3 =  0.2$, $\alpha_2 = \omega_2 = 0.3$, $\alpha_3 = \omega_1 = 0.5$)}
	\centering
	\label{Table: Pattern on 3 group case}
	\setlength\tabcolsep{1.5pt} 
\begin{tabular}{c|c|c|c|c|c|c|c|c|c|c|c|c|c|c}
	\hline 
	$\ \boldsymbol{T} \ $ & $10$ & $12$  & $14$ & $16$  & $18$ & $20$  & $22$ & $24$ & $\cdots$ & $30$ & $32$ & $34$ & $\cdots$ & $42$ \\ 
	\hline 
	$\ \boldsymbol{\rho_1}(\boldsymbol{T}) \ $& $2$ & $4$ & \multicolumn{8}{c|}{6} & $8$ & $10$ & $\cdots$ & $18$ \\ 
	\hline 
	$\ \boldsymbol{\rho_2}(\boldsymbol{T}) \ $& \multicolumn{3}{c|}{3} & $5$ & $7$ &  \multicolumn{9}{c}{9} \\ 
	\hline 
	$\ \boldsymbol{\rho_3}(\boldsymbol{T}) \ $& \multicolumn{6}{c|}{5} & $7$ & $9$ & $\cdots$ &  \multicolumn{5}{c}{15} \\ 
	\hline 
\end{tabular} 
\par\bigskip
\begin{tabular}{c|c|c|c|c|c|c|c|c}
	\hline 
	$\ \boldsymbol{T} \ $ & $44$ & $46$ & $\cdots$ & $60$ & $62$ & $64$ & $\cdots$ & $90$ \\ 
	\hline 
	$\ \boldsymbol{\rho_1}(\boldsymbol{T}) \ $& \multicolumn{8}{c}{18} \\ 
	\hline 
	$\ \boldsymbol{\rho_2}(\boldsymbol{T}) \ $& $11$ & $13$ & $\cdots$ &  \multicolumn{5}{c}{27} \\ 
	\hline 
	$\ \boldsymbol{\rho_3}(\boldsymbol{T}) \ $& \multicolumn{4}{c|}{15} & $17$ & $19$ & $\cdots$ & $45$ \\ 
	\hline 
\end{tabular}
\end{table}

Similar to the two priority group case, the problem can be divided into two sub-problems: 1) solving the optimal pilot assignment rule under the constraint of $\sum_{i=1}^{n}N_{pil}(\mathbf{p}_i) = T$, and 2) obtaining optimal $T$ value for a given $N_{coh}$. The solution to the $1^{st}$ sub-problem determines the optimal pilot length $\rho_1(T), \cdots, \rho_n(T)$ for each priority group (In the two-group case, the optimal lengths were $\rho(T)$ and $T-\rho(T)$). For a simple scenario as an example, the result of pilot resource allocation is expressed in Table \ref{Table: Pattern on 3 group case}. The system with $L=27$ cells and $K=10$ users per cell is assumed. The weight/ratio for three priority groups is fixed as $\alpha_1 = \omega_3 =  0.2$, $\alpha_2 = \omega_2 = 0.3$, $\alpha_3 = \omega_1 = 0.5$. 

Similar to Table \ref{Table: Pattern on optimal assignment_1} and \ref{Table: Pattern on optimal assignment_2}, the optimal pilot allocation to three priority groups has a certain pattern. As $T$ increases (i.e., additional pilots are allowed), one of three priority groups obtains extra pilots alternatively. Considering the WSR increment of allocating extra pilots to each priority group, optimal assignment chooses the priority group which maximizes the increment. Since the increment is proportional to $\omega_i$ value, the optimal rule chooses from a higher priority group to a lower priority group, sequentially. Therefore, in the general case of multiple priority groups, the pilot assignment rule for each priority group has certain characteristic which is easily obtained from the mathematical analysis on the two-group case.
Under the scenario of three priority groups, Fig. \ref{Fig:3Group_indiv_rates} illustrates the achievable rate averaged over each priority group. The system parameters are set to $L=27, K=10, \alpha_1=0.1, \alpha_2=0.4, \alpha_3=0.5, \omega_1 = 0.7, \omega_2 = 0.2, \omega_3 = 0.1$. Similar to the scenario with two priority groups, each priority group can enjoy a higher achievable rate by applying the net-WSR maximizing pilot assignment.

\begin{figure}[!t]
	\centering
	\includegraphics[height=50mm]{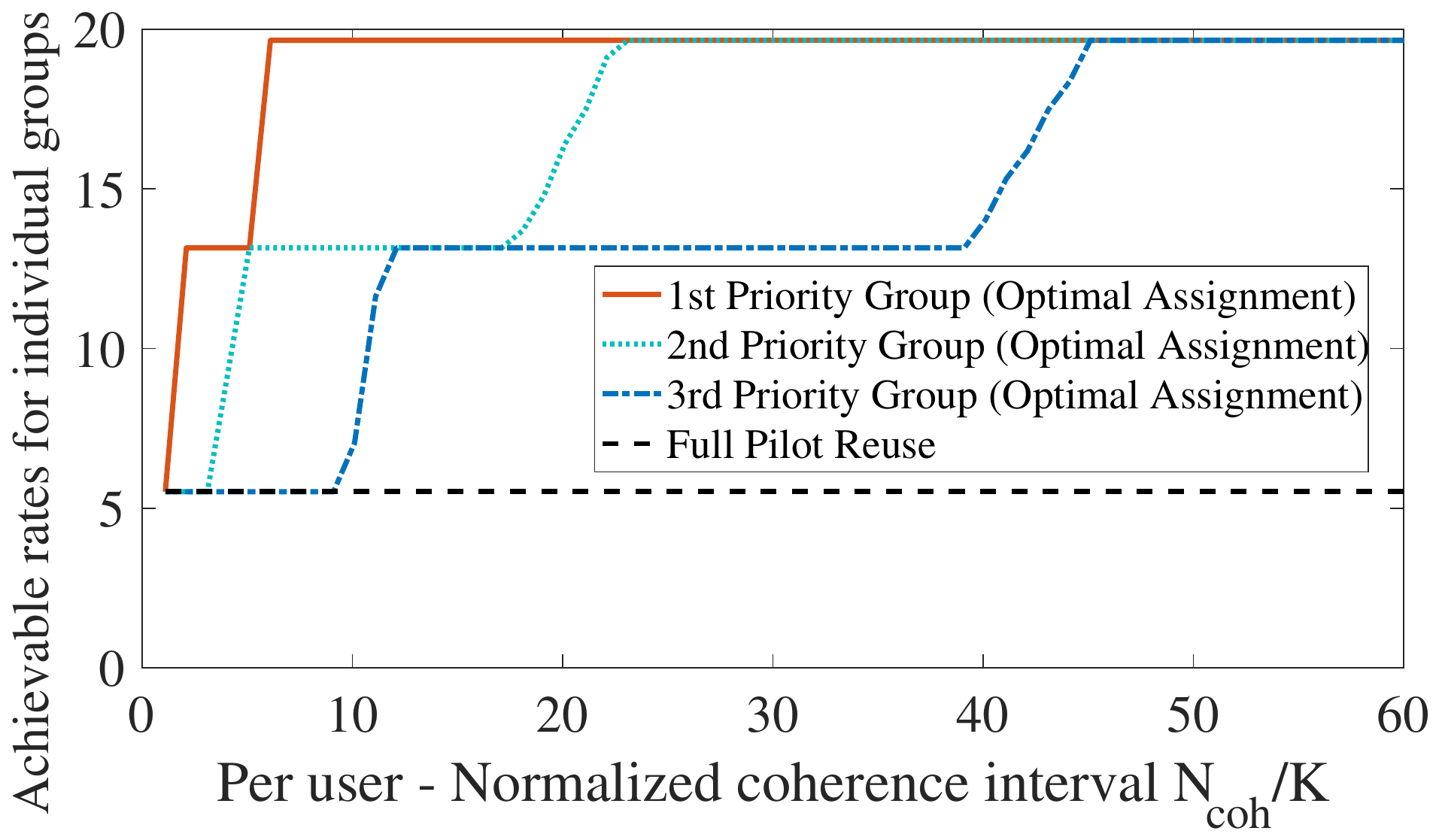}
	\caption{Per-user achievable rates of each priority group for optimal/conventional pilot assignments ($L=27,K=10, [\alpha_1, \alpha_2, \alpha_3]=[0.1,0.4,0.5], [\omega_1, \omega_2, \omega_3]=[0.7,0.2,0.1]$)}
	\label{Fig:3Group_indiv_rates}
\end{figure}

\subsection{Comments on finite antenna elements case}

This paper considers the scenario of having an infinite number of BS antennas in the mathematical analysis. However, a finite number of antennas will be deployed at each BS in practice, so that the performance of the suggested pilot assignment scheme needs to be confirmed for finite antenna elements. By Theorem 1 of \cite{bjornson2016massive}, the achievable rate of massive MIMO with pilot reuse factor $\beta$ can be obtained for finite $M$. 
Based on this result, the net-WSR maximizing optimal pilot assignment for finite $M$ is numerically obtained. 
Fig. \ref{Fig:FiniteM} illustrates the performance of optimal/conventional pilot assignment as a function of $M$. The receivers with maximal-ratio-combining (MRC) and zero-forcing-combining (ZFC) are compared, and the system parameters are assumed to be $L=81, K=10, \alpha=0.2, \omega = 0.7$ and $N_{coh}=200$. In both MRC and ZFC receivers, the optimal assignment has a performance gain compared to the conventional full pilot reuse, even in practical scenarios \cite{larsson2014massive, gao2015massive} of having $128$ antennas at each BS.
Under the assumption of using optimal assignment, the receivers with ZFC and MRC has a non-decreasing performance gap. This implies that an appropriate signal processing scheme can boost up the performance gain of optimal pilot assignment.

\begin{figure}[!t]
	\centering
	\includegraphics[height=50mm]{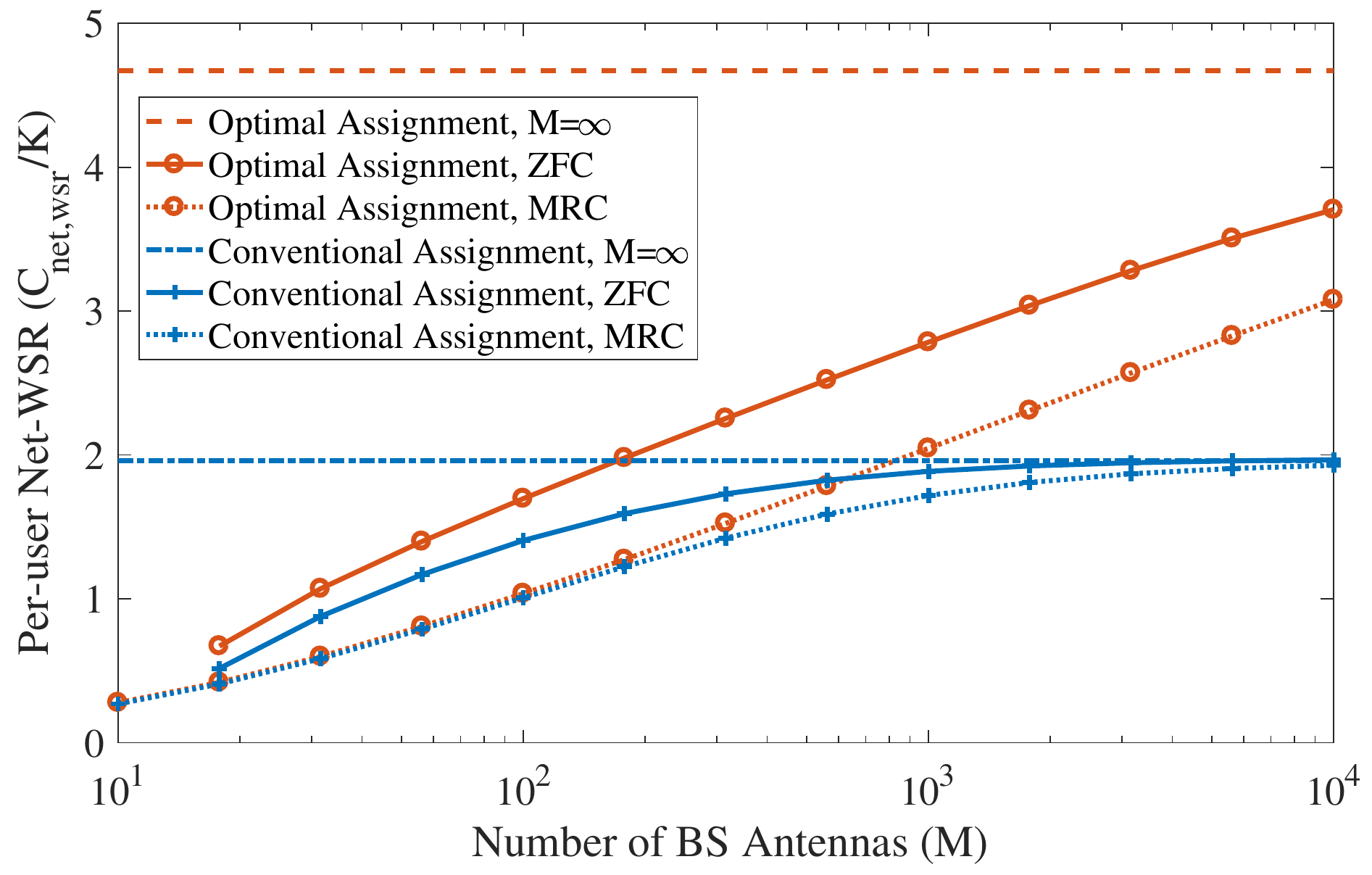}
	\caption{Per-user Net-WSR versus $M$ for optimal/conventional pilot assignments}
	\label{Fig:FiniteM}
\end{figure}

\section{Conclusion}\label{Section:Conclusion}

In a massive MIMO system with users grouped by priority, the optimal way of assigning pilots and the optimal portion of pilot training time to maximize the net-WSR have been found in a closed-form solution. As the available time slot $N_{coh}$ increases, the optimal strategy allows more time on pilot transmission, the allocation of extra pilot training time to either priority group depends on the weight $\omega$ and portion $\alpha$. Compared to the net sum-rate-maximization problem, our closed-form solution has better performance than conventional assignment for a wider range of $N_{coh}/K$ values, which means that our optimal solution can be applied to various practical channel scenarios. Simulation results of individual group achievable rates show that both groups can guarantee higher rates than conventional full pilot reuse, while a performance gap between the preferred/regular groups exists due to the different weights. The generalized scenario with multiple priority groups (greater than two) is also analyzed based on the insights obtained from the two-group case.

\appendices
\section{Proofs of Lemma 1, Lemma 2}\label{Appendix:Proofs of Lemmas 1 and 2}

Given the total pilot length $T$, it can be decomposed into two parts: pilot length $t$ for $1^{st}$ priority group, and pilot length $T-t$ for $2^{nd}$ priority group (as in the Table \ref{Table:Pilot length pairs}). Note that under the constraint on the pilot length $t$ for $1^{st}$ group, we already know that $\mathbf{p}'_{opt,K_1}(t)$ maximizes $C_{sum}$ from (\ref{Sohn2015:Thm1}). Similar results hold for $2^{nd}$ group. Therefore, when $t$ is fixed, we can obtain that $\bm{[}\: \mathbf{p}'_{opt, K_{1}}(t),\mathbf{p}'_{opt, K_{2}}(T-t) \:\bm{]}$ maximizes $C_{wsr}$.
However, a given $T$ can be decomposed into many possible $(t,T-t)$ pairs (note that the possible $t$ values are specified as $S_0(T)$).
All we need to find is optimal $t$, the pilot length for $1^{st}$ group.

Let $f(t)=C_{wsr}(\mathbf{p}'_{opt, K_{1}}(t), \mathbf{p}'_{opt,K_{2}}(T-t))$.
Then, using Collorary 1 of \cite{sohn2015pilots} and approximating $C_{i}$ as a linear function ($C_{i+1}-C_{i} \simeq 6$), we have
\begin{equation}\label{Appendix_A: equation1}
f(t) - f(t+2) = 6\: \omega\: 3^{-\chi(t, K_1)} (g_T(t)-1).
\end{equation}
Therefore, WSR comparison for two consecutive candidates $[\mathbf{p}'_{opt}(t, K_{1}), \mathbf{p}'_{opt}(T-t, K_{2})]$ and $[\mathbf{p}'_{opt}(t+2, K_{1}), \mathbf{p}'_{opt}(T-t-2, K_{2})]$ can be simplified by checking the sign of $g_T(t) - 1$ for $t \in S_1(T)$. 

\subsection{Proof of Lemma 1}

For $T$ values with $S_1(T)=\emptyset$, $S_0(T)$ has a single element $B(T)$, so that $[\mathbf{p}'_{opt, K_{1}}(B(T)),\\ \mathbf{p}'_{opt,K_{2}}(T-B(T))]$ is optimal trivially. The following proof deals with $T$ values with $S_1(T)\neq \emptyset$. 
Note that when $\log_3 \frac{\omega}{1-\omega} \notin \mathbb{Z}$, we have no $t \in S_{1}(T)$ such that $g_T(t)=1$. Using (\ref{Appendix_A: equation1}) and the fact that $g_T(t)$ is monotone increasing in $t$, we obtain optimal $t_{0}$ which maximizes $f(t)$ among $t\in S_0(T)$. The proof is divided into three cases.

First, if $g_T(B(T)) > 1$, we have $g_T(t)>1$ (i.e., $f(t) > f(t+2)$)  $\forall t \in S_{1}(T)$, so that $B(T)$ maximizes $f(t)$ among $t\in S_0(T)$. Second, if $g_T(F(T)-2) < 1$, we have $g_T(t)<1$ (i.e., $f(t) < f(t+2)$)  $\forall t \in S_{1}(T)$, so that $F(T)$ maximizes $f(t)$ among $t\in S_0(T)$. Finally, for the else case (i.e., $g_T(B(T)) \leq 1 \leq g_T(F(T)-2)$), 
denote $\xi(T)=min \{ t\in S_1(T)  \: : \: g_T(t) \geq 1 \}$.
Then, we have $g_T(t)<1$ (i.e., $f(t) < f(t+2)$) for $t = B(T), B(T) + 2, \cdots, \xi(T) - 2$ and $g_T(t)>1$ (i.e., $f(t) > f(t+2)$) for $ t = \xi(T), \xi(T) + 2, \cdots, F(T)$, so that $\xi(T)$ maximizes $f(t)$ among $t\in S_0(T)$.

\subsection{Proof of Lemma 2}

For the cases of $S_1(T)=\emptyset$, $g_T(B(T)) > 1$, or $g_T(F(T)-2) < 1$, similar approach to the proof of Lemma 1 can be applied. All we need to prove is the case of $g_T(B(T)) \leq 1 \leq g_T(F(T)-2)$, which implies $\exists t \in S_{1}(T)$ such that $g_T(t)=1$. For this case, denote $\xi(T) = min \{ t\in S_1(T)  \: : \: g_T(t) \geq 1 \}$ and $\eta(T)=max \{ t\in S_1(T)  \: : \: g_T(t) \leq 1 \} + 2$.
Based on (\ref{Appendix_A: equation1}), we have three equations: $f(t) < f(t+2)$ for $t \leq \xi(T)$, $f(t) = f(t+2)$ for $\xi(T) + 2 \leq t \leq \eta(T) - 2$, and $f(t) > f(t+2)$ for $t \geq \eta(T)$. Therefore, we can conclude that $\forall t \in \{\xi(T), \xi(T) + 2, \cdots, \eta(T) \}$ maximizes $f(t)$ among $t\in S_1(T)$.

\section{Proofs of Propositions 1 and 2}\label{Appendix:Proofs of Propositions 1 and 2}

\subsection{Proof of Proposition 1}
Here, we prove that the expression for $\rho(T)$ in (\ref{def:rho_T}) coincides with (\ref{def:rho_T_alternative}) and (\ref{def:rho_T_alternative2}).
Note that $S_1(T)=\emptyset$ (i.e., $B(T) = F(T)$) holds if and only if $T=K$ or $T=LK/3$, since we consider the $L \geq 9$ case. 
We begin with the $S_1(T)=\emptyset$ case, and proceed with the proof for the $S_1(T)\neq \emptyset$ case (i.e., when $T= K+2, K+4, \cdots, LK/3 -2$).

When $T=K$, $\rho(T)=B(T)=K_1$ from (\ref{def:rho_T}), which coincides with $\rho(T)=T-K_2 = K - K_2 = K_1$ in (\ref{def:rho_T_alternative}) and (\ref{def:rho_T_alternative2}). When $T=LK/3$, $\rho(T)=B(T)=LK_1/3$ from (\ref{def:rho_T}), which coincides with $\rho(T)=LK_1/3$ in (\ref{def:rho_T_alternative}) and (\ref{def:rho_T_alternative2}). Now we proceed to the case of $K < T < LK/3$.



\textbf{Case A:} if $3^s \geq \frac{L}{3}$

Expressions (\ref{def:rho_T}) and (\ref{def:rho_T_alternative}) are compared.
The proof is divided into three cases.

\textbf{Case A-1} (for $K < T \leq K_2 + \frac{LK_1}{3}$)

$F(T)=T-K_2$ from the definition of $F(T)$. Therefore, all we need to prove is $\rho(T)=F(T)$ for this $T$ interval. 
Note that $g_T(F(T)-2)=g_T(T-K_2-2)=\frac{1-w}{w} 3^{\chi(T-K_2-2, K_1)}$.
Since $\chi(N_{p0},K)$ is a monotonically increasing function of $N_{p0}$, we have
$g_T(F(T)-2) \leq \frac{1-\omega}{\omega} 3^{\chi(LK_1/3 - 2, K_1)} < 3^{-(s-1)}\frac{L}{9}\leq 1$
holds for $K < T \leq K_2 + \frac{LK_1}{3}$.
Therefore, from (\ref{def:rho_T}), $g_T(F(T)-2)<1$ implies $\rho(T)=F(T)$, which completes the proof.

\textbf{Case A-2} (for $K_2 + \frac{LK_1}{3} < T < \frac{LK}{3}$)

From case A-1, we have
$g_T(F(T)-2) < 1$ for $T=K_2 + LK_1/3$.
Also, from the definition of $F(T)$, we have $F(T+2)=F(T)=LK_1/3$ for $K_2 + LK_1/3 \leq T < LK/3$.
Thus, $g_{T+2}(F(T+2)-2)=g_{T+2}(F(T)-2)\leq g_{T}(F(T)-2)<1$. In summary, $g_{T}(F(T)-2)<1$ holds for $K_2 + \frac{LK_1}{3} < T < \frac{LK}{3}$, which implies $\rho(T)=F(T)$ from (\ref{def:rho_T}). Using $F(T) = LK_1/3$ for given range of $T$, we can confirm (\ref{def:rho_T}) coincides with  (\ref{def:rho_T_alternative}).


\textbf{Case B:} if $3^s < \frac{L}{3}$

Expressions (\ref{def:rho_T}) and (\ref{def:rho_T_alternative2}) are compared.
The proof is divided into three cases.

\textbf{Case B-1} (for $K < T \leq K_2 + 3^{s}K_1$)

Taking similar approach to case A-1 results in $\rho(T)=F(T)=T-K_2$ for these T values.

\textbf{Case B-2} (for $K_2 + 3^{s}K_1 < T < \frac{LK_1}{3} + \frac{LK_2}{3^{s+1}} $)

Denote $\xi(T) = min \{ t\in S_1(T)  \:\:|\enspace g_T(t) \geq 1 \}$.
The proof is divided into two parts: we first prove $g_T(B(T)) \leq 1 \leq g_T(F(T)-2)$ and then prove $\xi(T)=\phi(T)$ for a given $T$ range.
First, recall $B(T)=max(K_1, T-LK_2/3)$. If $B(T)=K_1$, then 
\begin{align*}
g_T(B(T)) &= 3^{-\chi(T-K_1-2, K_2)}\frac{1-\omega}{\omega}\nonumber\\
&< 3^{-\chi(K_2 + 3^s K_1 - K_1 - 2, K_2)}\frac{1-\omega}{\omega} \leq \frac{1-\omega}{\omega} \leq 1
\end{align*}
since $\chi$ is a nonnegative-valued function. If $B(T)=T-\frac{LK_2}{3}$, then 
\begin{equation*}
g_T(B(T)) = 3^{\chi(T-\frac{LK_2}{3}, K_1)-\chi(\frac{LK_2}{3}-2, K_2)}\frac{1-\omega}{\omega}
\leq 1
\end{equation*}
since $T<LK/3$. 
Similarly, recall $F(T)=min(T-K_2, \frac{LK_1}{3})$. If $F(T)=T- K_2$, then
\begin{align*}
g_T&(F(T)-2) = 3^{\chi(T-K_2-2, K_1)-\chi(K_2, K_2)}\frac{1-\omega}{\omega}\nonumber\\
&\geq 3^{\chi(3^s K_1, K_1) - \chi(K_2, K_2)}\frac{1-\omega}{\omega} = 3^{s} \frac{1-\omega}{\omega} \geq 1
\end{align*}
since $T \geq K_2 + 3^s K_1 +2$. If $F(T)=\frac{LK_1}{3}$, then
\begin{align*}
g_T&(F(T)-2) = 3^{\chi(\frac{LK_1}{3}-2, K_1)-\chi(T-\frac{LK_1}{3}, K_2)}\frac{1-\omega}{\omega}\\
&= 3^{log_3 L - 2} 3^{-\chi(T-\frac{LK_1}{3}, K_2)}\frac{1-\omega}{\omega} \nonumber\\
&\geq 3^{log_3 L - 2} 3^{-(log_3 L - 2 - s)}\frac{1-\omega}{\omega} = 3^s \frac{1-\omega}{\omega} \geq 1
\end{align*}
since $T \leq \frac{LK_1}{3} + \frac{LK_2}{3^{s+1}}-2$. Thus, we have $g_T(B(T)) \leq 1 \leq g_T(F(T)-2)$ for given $T$ range.

Now, we just need to prove $\xi(T)=\phi(T)$.
We start with proving $\xi(T)=3^{V(T)+s-1}K_1$ for $T \leq 3^{V(T)+s-1}K_1 +  3^{V(T)} K_2$. 
First,
\begin{align*}
&g_T(3^{V(T)+s-1}K_1 - 2)\\
&= 3^{\chi(3^{V(T)+s-1}K_1-2, K_1) - \chi(T-3^{V(T)+s-1}K_1, K_2)}\frac{1-\omega}{\omega}\\
&= 3^{V(T)+s-2 - \chi(T-3^{V(T)+s-1}K_1, K_2)}\frac{1-\omega}{\omega}\\
&< 3^{V(T)+s-2 - \chi(3^{V(T)-1}K_2, K_2)}\frac{1-\omega}{\omega}\\
&= 3^{V(T)+s-2 -V(T)+1}\frac{1-\omega}{\omega}=3^{s-1}\frac{1-\omega}{\omega}<1
\end{align*}
where the first inequality is from $T > 3^{V(T)+s-1}K_1 + 3^{V(T)-1}K_2$ by the definition of $V(T)$. Moreover,
\begin{align*}
&g_T(3^{V(T)+s-1}K_1)\\ 
&= 3^{\chi(3^{V(T)+s-1}K_1, K_1) - \chi(T-3^{V(T)+s-1}K_1-2, K_2)}\frac{1-\omega}{\omega}\\
&= 3^{V(T)+s-1 - \chi(T-3^{V(T)+s-1}K_1-2, K_2)}\frac{1-\omega}{\omega}\\
&\geq 3^{V(T)+s-1 - \chi(3^{V(T)}K_2-2, K_2)}\frac{1-\omega}{\omega} = 3^{s}\frac{1-\omega}{\omega}\geq 1
\end{align*}
where the first inequality is from $T \leq 3^{V(T)+s-1}K_1 +  3^{V(T)} K_2$.
Therefore, $\xi(T)=min \{ t\in S_1(T)  \:  : \:  g_T(t) \geq 1 \}=3^{V(T)+s-1}K_1$.

The next step is to prove $\xi(T)=T - 3^{V(T)}K_2$ for $T > 3^{V(T)+s-1}K_1 +  3^{V(T)} K_2$. First, 
\begin{align*}
&g_T(T - 3^{V(T)}K_2 - 2) \\
&= 3^{\chi(T - 3^{V(T)}K_2 - 2, K_1) - \chi(3^{V(T)}K_2, K_2)}\frac{1-\omega}{\omega}\\
&\leq 3^{\chi(3^{V(T)+s}K_1-2, K_1) - V(T)}\frac{1-\omega}{\omega} \leq 3^{s-1}\frac{1-\omega}{\omega} < 1
\end{align*}
where the first inequality is from $T \leq 3^{V(T)+s}K_1 + 3^{V(T)}K_2$ by the definition of $V(T)$.
Also, 
\begin{align*}
&g_T(T-3^{V(T)}K_2)\\
&= 3^{\chi(T-3^{V(T)}K_2, K_1) - \chi(3^{V(T)}K_2-2, K_2)}\frac{1-\omega}{\omega}\\
&= 3^{\chi(T-3^{V(T)}K_2, K_1) -V(T)+1} \frac{1-\omega}{\omega}\\
&> 3^{\chi(3^{V(T)+s-1}K_1, K_1)} 3^{-V(T)+1}\frac{1-\omega}{\omega}\\
&= 3^{V(T)+s-1-V(T)+1}\frac{1-\omega}{\omega} = 3^{s}\frac{1-\omega}{\omega}\geq 1
\end{align*}
where the first inequality is from $T > 3^{V(T)+s-1}K_1 +  3^{V(T)} K_2$.
Therefore, $\xi(T)=min \{ t\in S_1(T)  \: : \: g_T(t) \geq 1 \}=T - 3^{V(T)}K_2$. In summary, $\xi(T)=\phi(T)$ for a given range of $T$.

\textbf{Case B-3} (for $\frac{LK_1}{3} + \frac{LK_2}{3^{s+1}} \leq T < \frac{LK}{3}$)

From the definition of $F(T)$, we have $F(T)=\frac{LK_1}{3}$. Therefore,
$g_T(F(T)-2)=3^{\chi(\frac{LK_1}{3}-2, K_1) - \chi(T-\frac{LK_1}{3}, K_2)}\frac{1-\omega}{\omega}\leq 3^{s-1}\frac{1-\omega}{\omega} < 1.
$ From (\ref{def:rho_T}), we have $\rho(T)=F(T)=\frac{LK_1}{3}$.

\subsection{Proof of Proposition 2}

Consider $3^s \geq \frac{L}{3}$ case. 
From Proposition 1, we have $\\\rho(T+2)=\rho(T)+2$ for $K \leq T < K_2 + \frac{LK_1}{3}$ and $\rho(T+2)=\rho(T)=\frac{LK_1}{3}$ for $K_2 + \frac{LK_1}{3} \leq T < \frac{LK}{3}$.
If $3^s < \frac{L}{3}$, from the proposition 1, we have $\rho(T+2)=\rho(T)+2$ for $K \leq T < K_2 + 3^s K_1$ and $\rho(T+2)=\rho(T)=\frac{LK_1}{3}$ for $\frac{LK_1}{3} + \frac{LK_2}{3^{s+1}} \leq T < \frac{LK}{3}$.
In case of $K_2 + 3^s K_1 \leq T < \frac{LK_1}{3} + \frac{LK_2}{3^{s+1}}$, we have $\rho(T+2)=\rho(T)$ for $T < 3^{V(T)+s-1}K_1 + 3^{V(T)}K_2$ and $\rho(T+2)=\rho(T)+2$ for $T \geq 3^{V(T)+s-1}K_1 + 3^{V(T)}K_2$.
Therefore, for every case, we have either $\rho(T+2)=\rho(T)+2$ or $\rho(T+2)=\rho(T)$.

\section{Proof of Corollary 1}\label{Appendix:Proof of Corollary 1}
We start with the case of $T \geq \frac{LK_1}{3} + \max\{\frac{L}{3^{s+1}}, 1\} \: K_2$. From Proposition 1, we have $\rho(T+2) = \rho(T)=\frac{LK_1}{3}$ in this $T$ range. Therefore, from Theorem 1, we have $\bm{[}\: \mathbf{p}'_{opt, K_{1}}(\frac{LK_1}{3}),\mathbf{p}'_{opt, K_{2}}(T-\frac{LK_1}{3}) \:\bm{]} \in \widetilde P_{opt}(T)$, which result in $\bar{C}_{wsr}(T)=\omega C_{sum}(\mathbf{p}'_{opt, K_{1}}(\frac{LK_1}{3})) + (1-\omega) C_{sum}(\mathbf{p}'_{opt, K_{2}}(T-\frac{LK_1}{3}))$. Finally, we obtain $\bar{C}_{wsr}(T+2) - \bar{C}_{wsr}(T) =
(1-\omega) C_{sum}(\mathbf{p}'_{opt, K_{2}}(T+2-\frac{LK_1}{3})) - (1-\omega) C_{sum}(\mathbf{p}'_{opt, K_{2}}(T-\frac{LK_1}{3}))
=  (1-\omega) L 3^{-d_2} (C_{d_2 + 1} - C_{d_2} )$ where $d_2 = \chi(T- \rho(T), K_2)$.

In other cases ($T < \frac{LK_1}{3} + \max(\frac{L}{3^{s+1}}, 1) \: K_2$), we have either $\rho(T+2) = \rho(T)$ or $\rho(T+2) = \rho(T) + 2$ from Proposition 2. Denote $\mathbf{p}_{1}^{*} = \mathbf{p}'_{opt, K_{1}}(\rho(T))$ and $\mathbf{p}_{2}^{*} = \mathbf{p}'_{opt, K_{2}}(T+2 - \rho(T))$. Also, denote $\mathbf{p}_{1}^{**} = \mathbf{p}'_{opt, K_{1}}(\rho(T)+2)$ and $\mathbf{p}_{2}^{**} = \mathbf{p}'_{opt, K_{2}}(T - \rho(T))$. Then, we can confirm that $[\mathbf{p}_{1}^{*}, \mathbf{p}_{2}^{*}] \in \Theta(T+2)$ and $[\mathbf{p}_{1}^{**}, \mathbf{p}_{2}^{**}] \in \Theta(T+2)$. Since $\widetilde P_{opt}(T)$ chooses $[\mathbf{p}_{1}, \mathbf{p}_{2}] \in \Theta(T)$ which maximizes the WSR, we have $\bar{C}_{wsr}(T+2) = \max\{C_{wsr}(\mathbf{p}_{1}^{*}, \mathbf{p}_{2}^{*}), C_{wsr}(\mathbf{p}_{1}^{**}, \mathbf{p}_{2}^{**})\}$. Comparing this value with $\bar{C}_{wsr}(T) = C_{wsr}(\mathbf{p}'_{opt, K_{1}}(\rho(T)), \mathbf{p}'_{opt, K_{2}}(T-\rho(T)))$, we have $\bar{C}_{wsr}(T+2) = \bar{C}_{wsr}(T) + \delta_T$ where $\delta_T = max \{(1-\omega) L 3^{-d_2} (C_{d_2 + 1} - C_{d_2}),\omega L 3^{-d_1} (C_{d_1 + 1} - C_{d_1})\}$ (note that $d_1$ and $d_2$ are defined in the statement of Corollary 1).

\section{Proof of Theorem 2}\label{Appendix:Proof of Theorem 2}
We wish to find $\mathbf{p}_{opt}(N_{coh})$ that maximizes $C_{net,wsr}$ for any $N_{coh}$. Since $\bm{[}\: \mathbf{p}'_{opt, K_{1}}(\rho(T)),\mathbf{p}'_{opt, K_{2}}(T-\rho(T)) \:\bm{]}$ maximizes $C_{wsr}$ for the given $T$ constraint (from Theorem \ref{Theorem:Theorem1}), all we need to compare is $h_T(N_{coh})$ values for different $T \in \{K, K+2, \cdots, LK/3 \}$. Fig. \ref{Fig:Thm2 proof graphics} illustrates the graph of $h_T(N_{coh})$ for three consecutive $T$ values.

\begin{figure}[!t]
	\centering
   \includegraphics[height=30mm]{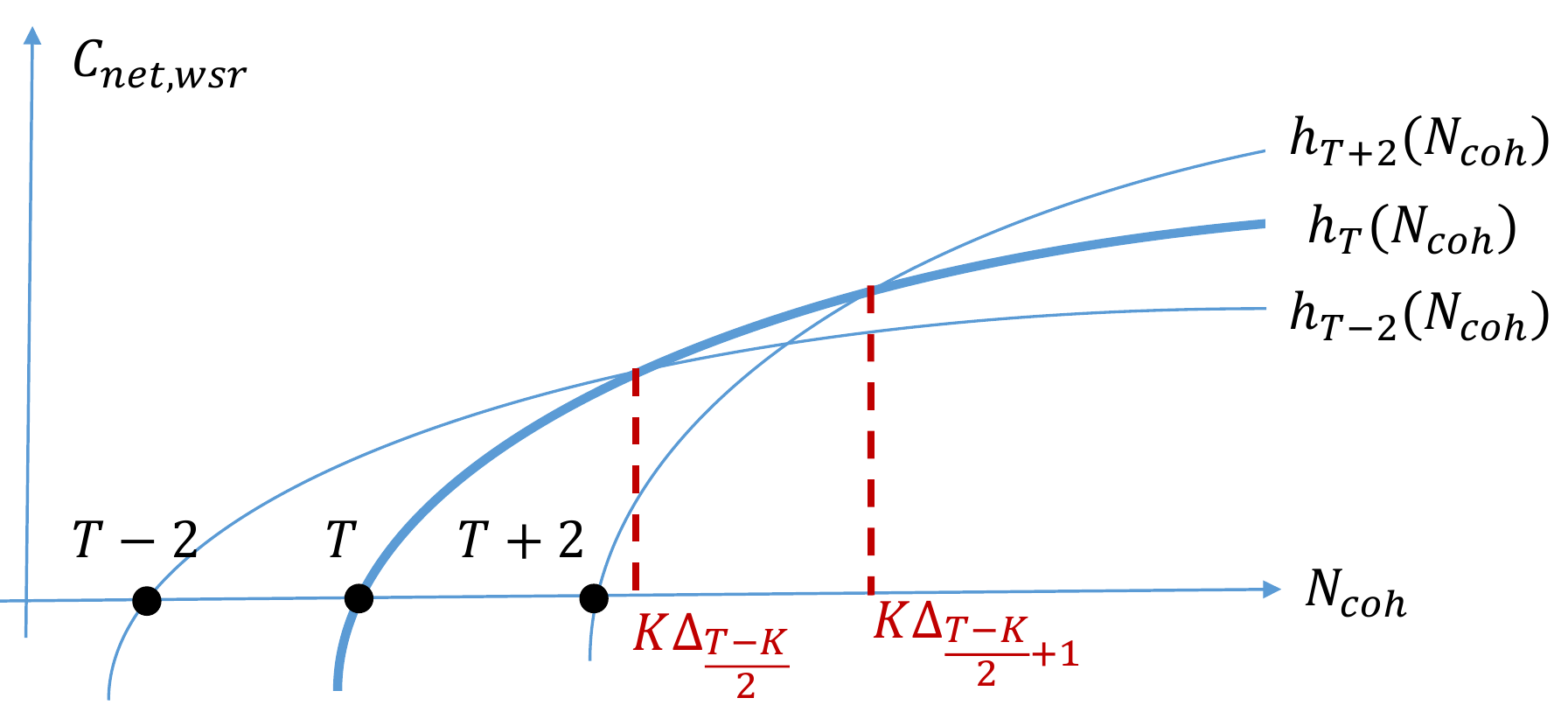}
    \caption{Graphical explanation for proof of Theorem 2}
    \label{Fig:Thm2 proof graphics}
\end{figure}

We start with checking the $N_{coh}$ value where $h_T(N_{coh})$ and $h_{T+2}(N_{coh})$ crosses. Based on the definition, $h_T(N_{coh}) = h_{T+2}(N_{coh})$ reduces to $N_{coh} = T + 2 + \frac{2\bar{C}_{wsr}(T)}{\delta_T}$. Similarly, $h_{T-2}(N_{coh})$ and $h_{T}(N_{coh})$ crosses at $N_{coh} = T + \frac{2\bar{C}_{wsr}(T-2)}{\delta_{T-2}}$. Therefore, $\bm{[}\: \mathbf{p}'_{opt, K_{1}}(\rho(T)),\mathbf{p}'_{opt, K_{2}}(T-\rho(T)) \:\bm{]}$ has maximum $C_{net,wsr}$ for $N_{coh} \in [T + \frac{2\bar{C}_{wsr}(T-2)}{\delta_{T-2}}, T + 2 + \frac{2\bar{C}_{wsr}(T)}{\delta_T})$. For general $T\in \{K, K+2, \cdots, LK/3 \}$ values, we used sequence $\Delta_n$ for the final statement.



%

\bibliographystyle{IEEEtran}
\bibliography{IEEEabrv,JSAC2017}

%

\begin{IEEEbiography}[{\includegraphics[width=1in,height=1.25in,clip,keepaspectratio]{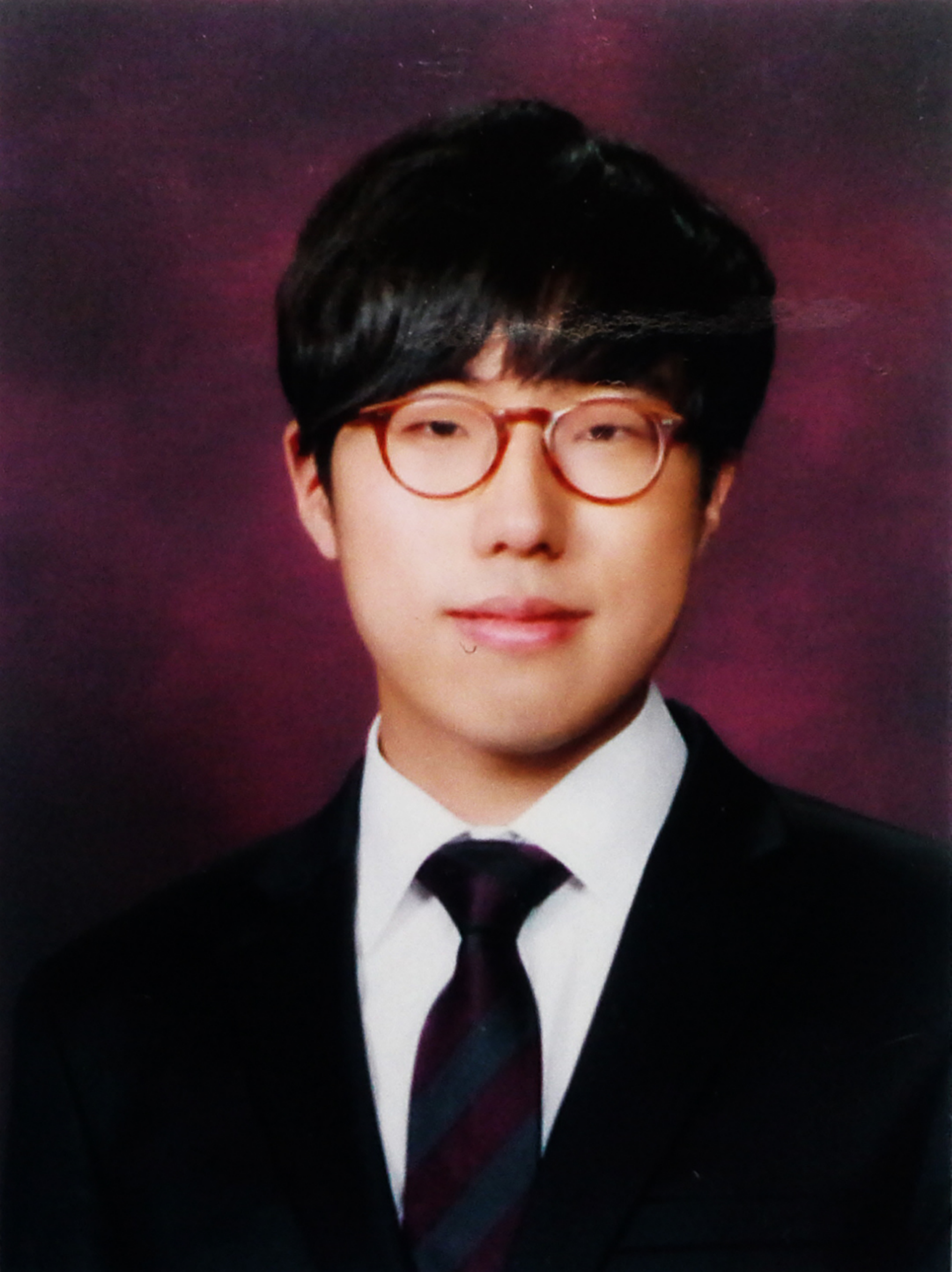}}]{Jy-yong Sohn}
received the B.S. and M.S. degrees in electrical engineering from the Korea Advanced Institute of Science and Technology (KAIST), Daejeon, Korea, in 2014 and 2016, respectively. He is currently pursuing the Ph.D. degree in KAIST. His research interests include massive MIMO effects on wireless multi cellular system and 5G Communications, with a current focus on distributed storage and network coding. \end{IEEEbiography}
\begin{IEEEbiography}[{\includegraphics[width=1in,height=1.25in,clip,keepaspectratio]{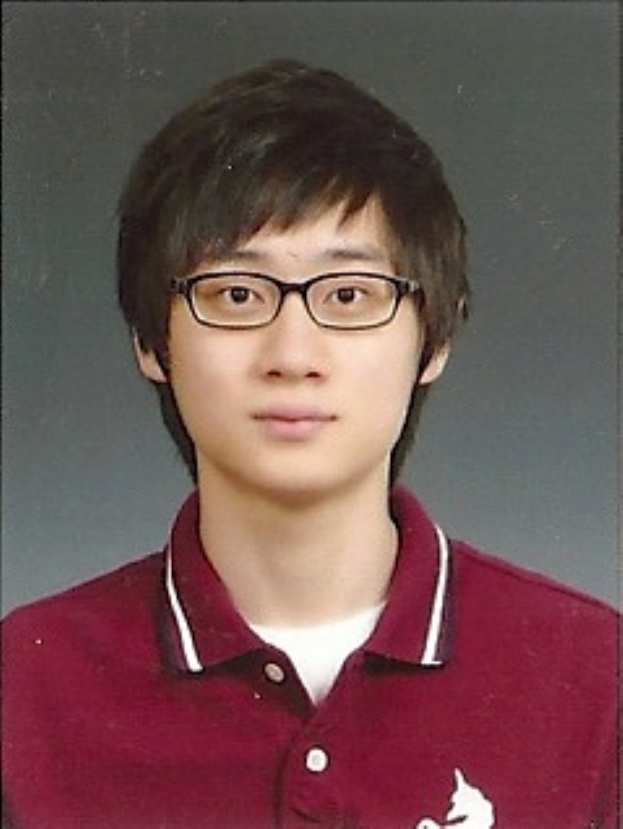}}]{Sung Whan Yoon}
received the B.S. and M.S. degrees in electrical engineering from the Korea Advanced Institute of Science and Technology (KAIST), Daejeon, Korea, in 2011 and 2013, respectively. He is currently pursuing the Ph.D degree in KAIST. His main research interests are in the field of coding and signal processing for wireless communication \& storage, especially massive MIMO, polar codes and distributed storage codes.
\end{IEEEbiography}
\begin{IEEEbiography}[{\includegraphics[width=1in,height=1.25in,clip,keepaspectratio]{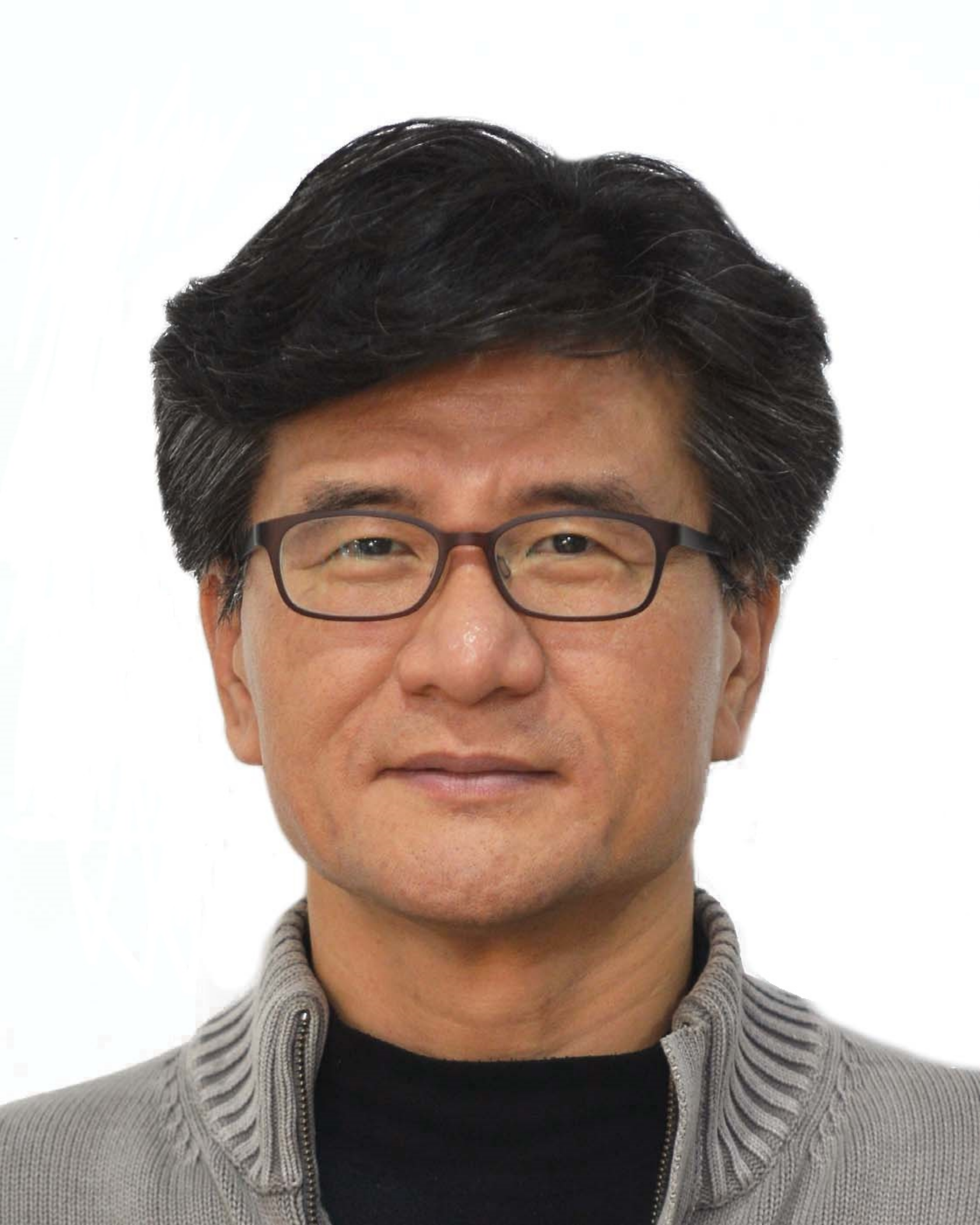}}]{Jaekyun Moon}
received the Ph.D degree in electrical and computer engineering at Carnegie Mellon University, Pittsburgh, Pa, USA. He is currently a Professor of electrical engineering at KAIST. From 1990 through early 2009, he was with the faculty of the Department of Electrical and Computer Engineering at the University of Minnesota, Twin Cities. He consulted as Chief Scientist for DSPG, Inc. from 2004 to 2007. He also worked as Chief Technology Officer at Link-A-Media Devices Corporation. His research interests are in the area of channel characterization, signal processing and coding for data storage and digital communication. Prof. Moon received the McKnight Land-Grant Professorship from the University of Minnesota. He received the IBM Faculty Development Awards as well as the IBM Partnership Awards. He was awarded the National Storage Industry Consortium (NSIC) Technical Achievement Award for the invention of the maximum transition run (MTR) code, a widely used error-control/modulation code in commercial storage systems. He served as Program Chair for the 1997 IEEE Magnetic Recording Conference. He is also Past Chair of the Signal Processing for Storage Technical Committee of the IEEE Communications Society, In 2001, he cofounded Bermai, Inc., a fabless semiconductor start-up, and served as founding President and CTO. He served as a guest editor for the 2001 IEEE JSAC issue on Signal Processing for High Density Recording. He also served as an Editor for IEEE TRANSACTIONS ON MAGNETICS in the area of signal processing and coding for 2001-2006. He is an IEEE Fellow.
\end{IEEEbiography}\vfill

\end{document}